\documentclass[a4paper,UKenglish,cleveref,autoref,thm-restate]{lipics-v2021}

\usepackage{mathtools}

\title{Eventual Nonnegativity of a Matrix Is in P}
\titlerunning{Eventual Nonnegativity of a Matrix Is in P}

\author{Julian D'Costa}{Independent, London, UK}{julianrdcosta@gmail.com}{https://orcid.org/0000-0003-2610-5241}{}
\authorrunning{J.~D'Costa}
\Copyright{Julian D'Costa}

\ccsdesc[500]{Theory of computation~Design and analysis of algorithms}
\ccsdesc[500]{Computing methodologies~Symbolic and algebraic manipulation}

\keywords{Eventual nonnegativity, matrix powers, polynomial-time algorithms, linear recurrence sequences, Galois equivariance, spectral projectors}

\supplement{The structural criterion and key algebraic inequalities have been formalised in Lean 4 and machine-checked.}
\supplementdetails[subcategory={Lean 4 formalisation}]{Software}{https://github.com/julianrdcosta/ptime-ENN-lean-formalization}

\nolinenumbers

\hideLIPIcs

\EventEditors{John Q. Open and Joan R. Access}
\EventNoEds{2}
\EventLongTitle{42nd Conference on Very Important Topics (CVIT 2016)}
\EventShortTitle{CVIT 2016}
\EventAcronym{CVIT}
\EventYear{2016}
\EventDate{December 24--27, 2016}
\EventLocation{Little Whinging, United Kingdom}
\EventLogo{}
\SeriesVolume{42}
\ArticleNo{23}

\newcommand{\Q}{\mathbb{Q}}
\newcommand{\N}{\mathbb{N}}
\newcommand{\Z}{\mathbb{Z}}
\newcommand{\Gal}{\operatorname{Gal}}
\newcommand{\poly}{\operatorname{poly}}

\begin{document}

\maketitle

\begin{abstract}
Given a rational matrix $A$, is every sufficiently large power $A^n$
entrywise nonnegative?  We prove that this problem is decidable in
deterministic polynomial time.  This is in contrast with deciding eventual
nonnegativity of a single prescribed entry sequence $(A^n)_{ij}$, which
amounts to the Ultimate Positivity Problem for linear recurrence
sequences, whose decidability is open.

The previous decidability procedure~\cite{dcosta-ow} splits the matrix powers into residue
classes modulo a torsion exponent $D$ whose value can be exponential in
the input size.  We show that it is sufficient to check a single progression $A^{Dk+1}$:
the exponents $n$ with $A^n\geq0$ are closed under addition, and two
consecutive exponents of the progression are coprime, so they generate
every sufficiently large exponent. 

Along the progression, eigenvalues are grouped by their common $D$th
power, and the required coefficients for each group are computed without forming
$A^D$, any $\lambda^D$, or a splitting field: each summand is evaluated
in its own root field $\Q(\lambda)$, and Galois equivariance keeps the
degree and height of every class sum polynomially bounded, enabling
certified zero and sign tests.  The same
algorithm with one case rejected decides eventual positivity, and a
variant decides whether the matrix has any nonnegative power at all.

\end{abstract}

\section{Introduction}\label{sec:intro}

A real square matrix $A$ is \emph{eventually nonnegative} if there is an
integer $N$ such that $A^n\geq0$ entrywise for every $n\geq N$.
The matrix itself may contain negative entries, and its early powers may have entries that 
change sign repeatedly; eventual nonnegativity asks whether this transient
behaviour ultimately disappears forever.

A natural question that arises is whether we can determine whether a given matrix is eventually nonnegative. This question is closely related to the Ultimate Positivity Problem for linear recurrence sequences, as we now discuss.

\subsection{From one entry to the whole matrix}
Fix a rational $d\times d$ matrix $A$ and indices $i,j$.  The
sequence of entries $(A^n)_{ij}$ is a rational linear recurrence
sequence (LRS) of order at most $d$: by the Cayley--Hamilton theorem,
$A^{n+d}$ is a fixed rational linear combination of
$A^{n+d-1},\ldots,A^{n}$, and reading off the $(i,j)$ entry of this
identity shows that $(A^n)_{ij}$ satisfies the corresponding order-$d$
recurrence.

Conversely, every rational LRS appears, up to a shift of the index, as
an entry of the powers of some rational matrix. Deciding eventual
nonnegativity of one
prescribed entry sequence is therefore a formulation of the Ultimate
Positivity Problem for linear recurrence sequences, whose decidability remains
open~\cite{akshay-weighted,dcosta-ow,ow-ultimate}.  Eventual
nonnegativity appears to demand the conjunction of $d^2$ such properties.
Nevertheless, we prove:
\begin{theorem}\label{thm:main}
Eventual nonnegativity of rational matrices is decidable in deterministic
polynomial time.
\end{theorem}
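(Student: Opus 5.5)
Let $S=\{n\ge1 : A^n\ge0\}$. Since products of nonnegative matrices are nonnegative, $S$ is closed under addition. The plan has three stages.

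\textbf{Stage 1: reduce to a single progression.} Let $\rho$ be the spectral radius of $A$ and let $D$ be a torsion exponent: $\lambda^D$ is real positive for every eigenvalue $\lambda$ with $|\lambda|/\rho$ a root of unity. This is the exponent used in~\cite{dcosta-ow}. The reduction is the claim that $A$ is eventually nonnegative if and only if $A^{Dk+1}\ge0$ for all large $k$.
\begin{itemize}
\item Necessity is trivial.
\item For sufficiency, suppose $A^{Dk+1}\ge0$ for all $k\ge K$. The exponents $DK+1$ and $D(K+1)+1$ differ by $D$ and are each $\equiv1\pmod D$, so their gcd divides $D$ and $1$, hence equals $1$.
\item By additive closure of $S$ and the Frobenius coin theorem, every sufficiently large $n$ lies in $S$.
\end{itemize}
So it suffices to decide eventual nonnegativity of the matrix sequence $B_k=A\,(A^D)^k$.

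\textbf{Stage 2: structural criterion along the progression.} Write $A=\sum_\lambda (\lambda P_\lambda+N_\lambda)$ via spectral projectors and nilpotent parts. Then $B_k$ expands as $\sum_{\mu}\mu^k \sum_{\lambda^D=\mu} Q_\lambda(k)$, where each $Q_\lambda$ is a matrix polynomial in $k$ built from $\lambda P_\lambda$, $N_\lambda$ and binomial coefficients in $Dk+1$. Because $D$ kills all dominant-modulus rotations, the eigenvalues of maximal modulus collapse to positive reals $\mu$.

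Each entry of $B_k$ is then asymptotically governed by the first nonzero coefficient matrix, ordered lexicographically by modulus and then by polynomial degree. That coefficient is a class sum $C=\sum_{\lambda^D=\mu}(\text{coefficient of }k^m\text{ in }Q_\lambda)$. The criterion, entrywise, reads as follows.
\begin{itemize}
\item If the leading class sum is real positive with a strictly dominant real $\mu$, the entry is eventually nonnegative.
\item If the leading class sum is negative, the entry fails.
\item If the dominant group has non-real $\mu$ (moduli tie but $\mu^D$ is not real), the entry oscillates and so fails unless the whole group cancels. One must argue this carefully via the classical lemma that a nondegenerate sum $\sum c_j e^{ik\theta_j}$ with $\theta_j\ne0$ takes negative values infinitely often.
\end{itemize}
Since $\mu$ ranges over $D$th powers of eigenvalues, only finitely many groups and finitely many $k$-degrees ($\le d$) need checking. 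Identically-zero entries are fine; this is the case rejected for eventual positivity.

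\textbf{Stage 3: polynomial-time evaluation (main obstacle).} $D$ can be exponential, so neither $A^D$ nor $\lambda^D$ can be formed explicitly. The approach is as follows.
\begin{enumerate}
\item Group eigenvalues $\lambda,\lambda'$ by testing $\lambda^D=\lambda'^D$ without computing powers. Equivalently, test that $\lambda/\lambda'$ is a root of unity of order dividing $D$; the orders of such roots of unity have polynomially bounded $\varphi$-value, so this can be checked by cyclotomic factor tests on resultants.
\item Compute each summand of a class sum in its own field $\Q(\lambda)$, of degree at most $d$. The binomial coefficients in $Dk+1$ contribute only polynomially many bits, since $D$ has polynomially many bits.
\item Show via Galois equivariance that each class sum is fixed by the stabiliser of the class. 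It therefore lies in a field of degree $\le d$ generated by a symmetric function, and its height is polynomially bounded.
\item Use root-separation bounds to make the zero tests and sign tests certified with polynomial precision.
\end{enumerate}
This bit-complexity bookkeeping, ensuring that no splitting field or exponentially large power ever appears, is the step I expect to be hardest.
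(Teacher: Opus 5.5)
Stage~1 (the coprime progression) and Stage~3 (each class sum computed termwise in its own root field $\Q(\lambda)$, Galois equivariance bounding its degree by the number of classes, Liouville-type certified tests) match the paper. Stage~2 has a genuine gap in the direction ``$A$ eventually nonnegative $\Rightarrow$ your criterion accepts.''

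Your third bullet rejects a tie at the dominant modulus by appeal to the oscillation lemma, but that lemma needs every $\theta_j\neq0$. A tie can contain the positive real class root at the same polynomial degree, and then the entry on its own can be eventually positive. For example, $u_k=2\cdot5^k+\mathrm{Re}\,(3+4i)^k\geq5^k$ has roots $5,3\pm4i$, which are nondegenerate and of equal modulus. So ``the entry oscillates and fails'' is false, and deciding such entries on their own is, in general, exactly the hard case of Ultimate Positivity.

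Rejecting is nevertheless correct, but only for a matrix-level reason your proposal never invokes. If $A$ is eventually nonnegative, pick $t$ with $A^{Dt},A^{Dt+1}\geq0$. The subsequence $v_m=e_i^\top A^{Dt+1}(A^{Dt})^me_j$ of the entry along the progression then has a nonnegative linear representation. By Berstel's theorem (Lemma~\ref{lem:berstel}) its dominant roots differ by roots of unity, and nondegeneracy leaves exactly one. Hence in a yes-instance every entry has a unique active class of maximal modulus, so every modulus tie can be rejected, whatever the degrees. This rigidity step is the heart of Theorem~\ref{thm:criterion}, and entrywise asymptotics cannot replace it.

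Your choice of $D$ is also too weak. Normalising only the eigenvalues with $\lambda/\rho$ a root of unity is not enough. When the top-modulus contributions cancel in a given entry, that entry's dominant active modulus lies strictly below $\rho$. Unmerged torsion there makes the entry sequence along $Dk+1$ degenerate, and your criterion then misfires.

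Take $A=\mathrm{diag}(2,P_3)$, with $P_3$ the cyclic $3\times3$ permutation matrix. Then $D=1$ (or $D=2$) meets your condition. Yet the $(2,2)$ entry along the progression is $\frac13\bigl(1+\omega^{Dk+1}+\overline\omega^{\,Dk+1}\bigr)$: a non-cancelling dominant group containing non-real bases. Your third bullet rejects it, although $A\geq0$.

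You need $D$ to kill every root-of-unity quotient of distinct nonzero eigenvalues, and to be even so that real class roots are positive, as in Lemma~\ref{lem:torsion}. That global nondegeneracy is also what Berstel's theorem needs in order to yield a unique dominant root in the argument above.
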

This is possible because the $d^2$ entry sequences are not independent: they come
from a single multiplicative orbit, and a yes-instance has rigid algebraic
structure that an arbitrary scalar LRS lacks. 

In the other direction, eventual nonnegativity is more delicate
than eventual \emph{positivity}, in which all entries must become strictly
positive.  Eventual positivity admits a characterization of
Perron--Frobenius type~\cite{noutsos}, checkable in polynomial time: a
single positive eigenvalue dominates the picture---more precisely, a
simple eigenvalue of strictly maximal modulus whose left and right
eigenvectors are entrywise positive.  Allowing zeros
destroys that rigidity.  The matrix may be reducible, may have nilpotent components and
nontrivial Jordan blocks; several eigenvalues of maximum modulus may
interact; and their spectral contributions may cancel exactly.  

\subsection{Why the previous procedure is not polynomial}
D'Costa, Ouaknine, and Worrell gave an effective characterization of
eventually nonnegative rational matrices~\cite{dcosta-ow}, answering a
question of Akshay, Chakraborty, and Pal~\cite{akshay-weighted}.  Their
argument chooses an exponent $D$ that kills every root-of-unity quotient
of distinct eigenvalues---so that $\lambda^D=\gamma^D$ whenever
$\lambda/\gamma$ is a root of unity---and analyses the subsequences
$(A^{Dk+r})_{k\geq0}$ for each residue $0\leq r<D$.  Each such
subsequence is \emph{nondegenerate}: in its closed form, no quotient of
two distinct exponential bases is a root of unity.  This proves
decidability, but although $D$ has polynomially many bits, its value can
be exponential, so enumerating all $D$ residues is exponential.  

A second difficulty is independent of the enumeration: the subsequences
cannot be handled through their explicit closed forms either.  Their
exponential bases are the $D$th powers $\lambda^D$, and distinct
eigenvalues can coalesce, since $\lambda^D=\gamma^D$ as soon as
$\lambda/\gamma$ is a root of unity killed by $D$.  Forming $A^D$ or any $\lambda^D$ may produce exponentially
many bits, and adjoining all eigenvalues to a common field may produce a
splitting field of degree $d!$.  A polynomial-time algorithm must therefore decide exact
cancellation among such cross-field contributions \emph{without}
constructing any of these objects.

\subsection{Three ideas}
The proof of Theorem~\ref{thm:main} is organized around the following three ideas.
\begin{enumerate}
\item \emph{One residue class suffices.}
      The set $S=\{n\geq1:A^n\geq0\}$ is closed under addition, because a
      product of nonnegative matrices is nonnegative.  An additively
      closed subset of $\N$ containing two coprime elements contains
      every sufficiently large integer, by the Frobenius coin theorem.  Hogben and Wilson package
      this combination for matrices~\cite{hogben-wilson}: they call a
      matrix property \emph{power-hereditary} when it passes from $M^k$
      and $M^\ell$ to $M^{k+\ell}$, and they prove that such a property
      holding at finitely many exponents with greatest common divisor $1$
      holds at all sufficiently large
      exponents~\cite[Corollary~2.5]{hogben-wilson}.
      The progression itself is also classical: a torsion-killing
      modulus with offset $1$ appears, for spectral rather than
      algorithmic ends, inside Zaslavsky and Tam's proof of Friedland's
      lemma~\cite[proof of Theorem~3.1]{zaslavsky-tam}, reproduced
      as~\cite[Theorem~4.8]{hogben-wilson}.  What is new here is its
      algorithmic role: collapsing the enumeration over residues.

      We choose the
      exponents so that the Hogben--Wilson principle
      applies to specific powers where the spectral analysis is tractable. Let $D$ be an even
      exponent killing all root-of-unity quotients of distinct nonzero
      eigenvalues. Notice that consecutive exponents of the progression $Dk+1$ are
      coprime, therefore
      \[
       \text{$A$ is eventually nonnegative}
       \quad\Longleftrightarrow\quad
       \text{$A^{Dk+1}\geq0$ for all sufficiently large $k$}.      
      \]
      The quantification over all $D$ residue classes thus
      reduces to checking a single well-chosen residue class.
\item \emph{Matrix rigidity.}
      Along the chosen progression the spectral picture simplifies.
      Every entry sequence $(A^{Dk+1})_{ij}$ is an LRS in $k$: a sum of
      terms $P(k)\,\nu^k$ with polynomial coefficients, whose
      bases $\nu$ are the $D$th powers $\lambda^D$ of eigenvalues.  By the
      choice of $D$, no quotient of two distinct bases is a
      root of unity, so the sequences are nondegenerate.
      Suppose $A^{Dk+1}\geq0$ for all $k\geq k_0$, and sample one entry
      on a further arithmetic subsequence: with $N_0=Dk_0+1$,
      \[
       v_m=\bigl(A^{N_0(1+Dm)}\bigr)_{ij}
        =e_i^\top A^{N_0}\bigl(A^{N_0D}\bigr)^m e_j .
      \]
      Since $A^{N_0D}=(A^{N_0})^D\geq0$, the sequence $v_m$ has the form
      $x^\top M^my$ with $x$, $M$, $y$ entrywise nonnegative.  A theorem
      of Berstel on the poles of nonnegative rational
      series~\cite{berstel-1971,berstel-reutenauer} says that a
      nondegenerate LRS with such a nonnegative realization has a
      \emph{unique} exponential basis of maximum modulus.  As discussed in~\cite{dcosta-ow}, tracing back from the subsequence $v_m$ to the progression $A^{Dk+1}$,
      every entry that is not eventually zero has a unique dominant
      contribution, and its coefficient must be positive.
\item \emph{Galois equivariance lets us avoid the splitting field.}
      The resulting criterion must still be checked efficiently.  The
      coefficient of a base $\mu=\lambda^D$ in the closed form of
      $(A^{Dk+1})_{ij}$ is a \emph{grouped coefficient}: a sum, over all
      eigenvalues $\lambda$ with $\lambda^D=\mu$ (a \emph{torsion
      class}), of individual spectral contributions, each computable
      inside its own root field $\Q(\lambda)$ of degree at most $d$.  The
      summands of a class need not share a small common field, so the
      naive degree bounds for their sum are exponential: adjoining all
      terms could require their compositum, and the splitting field of
      the characteristic polynomial $\chi_A$ can have degree
      $d!$.
      Here we take advantage of Galois equivariance. Galois conjugation permutes conjugate eigenvalues, and
      conjugating the closed form of the rational sequence
      $(A^{Dk+1})_{ij}$ term by term shows that every conjugate of a grouped
      coefficient is again the sum over a complete torsion class---one of
      at most $d$ candidates.
      
      A grouped coefficient therefore has
      algebraic degree at most $d$, and separately, direct estimates give it
      polynomial logarithmic height.  A nonzero grouped coefficient is
      then separated from zero by an explicit $2^{-\poly}$
      Liouville-type bound, and certified ball
      arithmetic decides exact
      zero and sign in polynomial time.
\end{enumerate}

The result is a decision algorithm for eventual nonnegativity: it determines
whether there is a threshold beyond which the matrix powers are nonnegative, without computing the least one; on a
yes-instance a valid threshold with polynomially many bits can nevertheless
be certified (Remark~\ref{rem:threshold}). 

The same algorithm with one case
rejected decides eventual positivity in polynomial time, independently
of~\cite{noutsos}. A variant of the analysis further shows that the
existence of a nonnegative power of a given matrix---eventual or not---is
decidable in polynomial time (Corollary~\ref{cor:existspower}).

\subsection{Related work}
Eventual nonnegativity was introduced by Friedland in connection with the
nonnegative inverse eigenvalue problem~\cite{friedland}, and its Jordan
and combinatorial structure was developed by Naqvi and
McDonald, among others~\cite{naqvi-mcdonald-comb,naqvi-mcdonald}.  Hogben and Wilson's
general theory of eventual matrix properties supplies the
power-hereditary principle used above~\cite[Corollary~2.5]{hogben-wilson};
they also answer a question of Zaslavsky and Tam~\cite{zaslavsky-tam} in the same spirit:
nonnegativity at an exponent $\equiv1\pmod p$ for every modulus $p$
implies eventual nonnegativity~\cite[Corollary~2.10]{hogben-wilson}.
Hogben gives a spectral test for the stronger notion of strong eventual
nonnegativity~\cite{hogben-strong}.
Matrices possessing at least one nonnegative power---\emph{power
nonnegative} matrices---and their relationship to eventual
nonnegativity are studied by Tudisco, Cardinali, and
Di Fiore~\cite{tudisco-cdf}; Corollary~\ref{cor:existspower} shows that
membership in this class is decidable in polynomial time for rational
matrices.

Eventual positivity also arises in control theory.  There, a
discrete-time linear \emph{system} is a recurrence
$x_{k+1}=Ax_k+Bu_k$, $y_k=Cx_k$, transforming an input sequence
$(u_k)$ into an output sequence $(y_k)$ through a hidden internal
state $x_k$; the matrix triple $(A,B,C)$ is a \emph{realization} of
the resulting input--output map, \emph{minimal} if no realization
with a smaller state dimension induces the same map.  The system is
\emph{externally positive} if nonnegative inputs always produce
nonnegative outputs; equivalently, its impulse response $CA^nB$,
$n\geq0$, is entrywise nonnegative.  Altafini~\cite{altafini} showed
that an externally positive system may admit no minimal realization
in which $A$, $B$, $C$ are all nonnegative, yet admit one whose state
matrix $A$ is eventually positive---the negativity is confined to a
transient of the internal dynamics.  Deciding eventual positivity of
a given state matrix is exactly the problem studied here
(Remark~\ref{rem:positivity}), and Corollary~\ref{cor:existspower}
decides whether the state matrix becomes nonnegative under some
\emph{decimation}, that is, after re-sampling the dynamics every
$n$th step, which replaces $A$ by $A^n$.  In infinite dimensions,
eventual positivity of operator semigroups is an active line of
research in functional analysis~\cite{daners-glueck-kennedy}.

On the LRS side, Ultimate Positivity
is decidable in PSPACE for \emph{simple} LRS (those without repeated
characteristic roots), and in polynomial time for fixed
order~\cite{ow-ultimate}; and in the weighted setting---a sum of powers
of several matrices with scalar weights---already two summands recover
the difficulty of Ultimate Positivity~\cite{akshay-weighted}.  

Exact
polynomial-time linear
algebra in individual root fields, which our local computations rely on,
goes back to Cai's exact Jordan algorithm~\cite{cai} and to polynomial
factorization over $\Q$~\cite{lll,yap}.  Our contribution is the
complexity-sensitive synthesis: one torsion-killing progression, and
cross-field coefficients whose degree stays low by Galois equivariance.

\subsection{Comment on the use of AI}
The results in this paper were obtained by the OpenAI model GPT-5.6 Sol on 2nd August 2026. The author takes responsibility for correctness, originality, and exposition. A sorry-free Lean formalisation of the criterion and the key inequalities behind the complexity analysis is available at \url{https://github.com/julianrdcosta/ptime-ENN-lean-formalization}. Verifying an end-to-end polynomial-time algorithm was unfortunately infeasible since Mathlib as of writing lacks infrastructure for proving algorithmic complexity bounds.

\section{One residue class suffices}\label{sec:one-residue}

\begin{lemma}[Coprime-residue lemma]\label{lem:coprime-residue}
For every integer $D\geq1$ and every real square matrix $A$,
\[
 A\text{ is eventually nonnegative}
 \quad\Longleftrightarrow\quad
 A^{Dk+1}\geq0\text{ for every sufficiently large }k.
\]
\end{lemma}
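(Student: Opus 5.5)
The forward direction is immediate: if $A^n\geq0$ for all $n\geq N$, then $A^{Dk+1}\geq0$ as soon as $Dk+1\geq N$. The work is in the converse. Here the plan is to combine the additive closure of $S=\{n\geq1:A^n\geq0\}$ with an elementary Frobenius coin argument. We give that argument directly; it also follows from \cite[Corollary~2.5]{hogben-wilson}, since entrywise nonnegativity is power-hereditary.

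\emph{Step 1: additive closure.} If $A^m\geq0$ and $A^n\geq0$, then $A^{m+n}=A^mA^n$ is a product of entrywise nonnegative real matrices. Each entry of such a product is a sum of products of nonnegative reals, so $A^{m+n}\geq0$. Hence $S$ is closed under addition.

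\emph{Step 2: two coprime generators.} Suppose $A^{Dk+1}\geq0$ for all $k\geq k_0$. Put $a=Dk_0+1$ and $b=D(k_0+1)+1=a+D$. Then $\gcd(a,b)=\gcd(a,D)$. Since $a\equiv1\pmod D$, this gcd divides $1$, so $a$ and $b$ are coprime; both lie in $S$.

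\emph{Step 3: Frobenius.} By the Frobenius coin theorem (Sylvester's bound), every integer $n\geq(a-1)(b-1)$ can be written as $n=xa+yb$ with integers $x,y\geq0$. For $D=1$ this degenerates to $a=k_0+1$, $b=a+1$, and the bound still holds; alternatively, for $D=1$ the hypothesis is literally eventual nonnegativity. For such $n\geq1$, not both $x$ and $y$ are zero, so iterating Step 1 gives $A^n=(A^a)^x(A^b)^y\geq0$. Thus $A^n\geq0$ for every $n\geq(a-1)(b-1)$, which is the required threshold.

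There is no genuine obstacle here. The only points to watch are the gcd computation in Step 2, which is exactly why the offset is $1$, and the fact that $n=0$ never arises, so $A^0=I$ is not needed even though it is itself nonnegative.
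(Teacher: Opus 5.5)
Your proof is correct and follows the paper's own argument. The paper also takes $a=DK+1$ and $b=D(K+1)+1$, notes $\gcd(a,b)=\gcd(DK+1,D)=1$, and combines closure of nonnegativity under products with the Frobenius coin theorem; your threshold $(a-1)(b-1)$ is the same bound $ab-a-b+1$ that the paper uses later in Remark~\ref{rem:threshold}.
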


This is an immediate specialization of the power-hereditary principle of
Hogben and Wilson~\cite[Corollary~2.5]{hogben-wilson};
entrywise nonnegativity is power-hereditary because products of
nonnegative matrices are nonnegative.  Progressions of this shape
have a history in the spectral theory of eventual nonnegativity: within
their proof of Friedland's lemma, Zaslavsky and Tam pass to exponents
$m\ell_0r+1$ where $m$ is a product of the orders of all root-of-unity
ratios in the spectrum---exactly a torsion-killing modulus with offset
$1$~\cite[Theorem~3.1]{zaslavsky-tam} (see
also~\cite[Theorem~4.8]{hogben-wilson}).  There the progression aligns
Frobenius multisets; here its job is algorithmic.  We include the
short proof because
applying the principle at the torsion-killing modulus of
Lemma~\ref{lem:torsion} is central to
the algorithm.

\begin{proof}
The forward implication is immediate.  Conversely, suppose the right-hand
side holds for all $k\geq K$ and put $a=DK+1$, $b=D(K+1)+1$.  Then
$A^a,A^b\geq0$ and $\gcd(a,b)=\gcd(DK+1,D)=1$.  Products of nonnegative
matrices are nonnegative, so $A^{pa+qb}=(A^a)^p(A^b)^q\geq0$ for
$p,q\geq0$, and by the Frobenius coin theorem every sufficiently large
integer has this form.
\end{proof}

The lemma is special to matrix powers.
Its scalar analogue would read: if an LRS satisfies $u_{Dk+1}\geq0$ for
every sufficiently large $k$, then $u_n\geq0$ for every sufficiently
large $n$.  This is false: for $D=2$, the rational LRS $u_n=1-2(-1)^n$
equals $3$ at every odd index and $-1$ at every even index, so it is
nonnegative along the entire progression $2k+1$ yet negative infinitely
often.  The set of indices at which an LRS is nonnegative lacks the
additive structure of the set of exponents at which a matrix power is
nonnegative.

\section{Conventions and two general lemmas}\label{sec:prelim}

Throughout, $A\in\Q^{d\times d}$ with $d\geq1$ denotes the input
matrix, $d$ its dimension, and $\langle A\rangle$ the total bit length of
its entries, each rational entry being encoded in reduced form with
numerator and denominator in binary; ``the input size'' always refers to
$\langle A\rangle$.  Since $d\leq\langle A\rangle$, a bound polynomial in
$\langle A\rangle$ and $d$ is simply polynomial in the input size.
We write $e_i$ for the $i$th standard basis vector.  An LRS is
\emph{nondegenerate} if no ratio of two distinct characteristic roots is
a root of unity, and \emph{$\Q_+$-rational} if it admits a representation
$u_n=x^\top M^ny$ with $x,M,y$ entrywise nonnegative and rational.

Algebraic numbers are represented in the usual bit model, by a minimal
polynomial together with an isolating disc selecting the intended root.
We write $h(\alpha)$ for the absolute logarithmic Weil height.  Isolating
discs can be refined, and equality and order of real algebraic numbers
decided, in time polynomial in the representation lengths and the
requested precision~\cite{lll,yap}.  Derived quantities in a root
field $\Q(\lambda)$ are represented in the power basis
$1,\lambda,\ldots,\lambda^{\deg\lambda-1}$ with one common integer
denominator, alongside the primitive representation of $\lambda$ itself;
``requested precision'' always means a number of correct output bits,
i.e.\ an absolute error bound of the form $2^{-\text{precision}}$.

The first lemma converts an explicit degree-and-height bound into an
exactness test.  The second is the algebraic source of the
degree bounds it consumes.  We state it independently of matrices, since
it applies to any sum of algebraic numbers whose terms are permuted
compatibly with a Galois action, even when no small common field for the
terms exists.

\begin{lemma}[Certified zero and sign tests]\label{lem:certified}
Let $\alpha$ be an algebraic number, and suppose we are given
positive integers $e,H$ with
\[
 [\Q(\alpha):\Q]\leq e,
 \qquad
 h(\alpha)\leq H\log 2,
\]
together with an algorithm that, for every $\varepsilon>0$, computes
a complex ball of radius at most $\varepsilon$ containing $\alpha$.  Then
one can decide whether $\alpha=0$ and, when $\alpha$ is known to be real,
determine its sign, using $O(eH)$ bits of precision and time polynomial
in $e$, $H$, and the cost of one ball evaluation at that precision.
\end{lemma}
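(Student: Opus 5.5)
The plan is to combine a Liouville-type lower bound on $|\alpha|$ for nonzero $\alpha$ with a single sufficiently precise ball evaluation. First I will prove the separation bound: if $\alpha\neq0$ then $\log|\alpha|\geq -[\Q(\alpha):\Q]\,h(\alpha)$, hence $|\alpha|\geq 2^{-eH}$. To get this, write $n=[\Q(\alpha):\Q]$ and apply the product formula to $\alpha$ in $K=\Q(\alpha)$. The inequality $\log|\alpha|_v\leq \log^+|\alpha|_v$ at every place except the archimedean place attached to the chosen complex embedding gives
\[
 -\tfrac{1}{n}\log|\alpha|=\tfrac{1}{n}\sum_{v\neq v_0} n_v\log|\alpha|_v\leq h(\alpha).
\]
Here I normalise so that $h(\alpha)=\frac1n\sum_v n_v\log^+|\alpha|_v$. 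The local degree $n_{v_0}\in\{1,2\}$ can only help, and any slack is absorbed by working with $|\alpha|^{n_{v_0}}$. The result is $|\alpha|\geq e^{-nh(\alpha)}\geq 2^{-eH}$. Equivalently, I could use $h(\alpha^{-1})=h(\alpha)$ together with the bound $|\beta|\leq e^{n h(\beta)}$ on every conjugate of $\beta=\alpha^{-1}$. That route may be the cleaner one to write down.

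Next comes the decision procedure. I set $\varepsilon=2^{-eH-2}$ and compute a ball $B$ of radius at most $\varepsilon$ containing $\alpha$, with centre $c$. If $|c|\leq 2\varepsilon$, then $|\alpha|\leq 3\varepsilon<2^{-eH}$, so the separation bound forces $\alpha=0$. Otherwise $|c|>2\varepsilon$, so $|\alpha|\geq |c|-\varepsilon>\varepsilon>0$ and $\alpha\neq0$. When $\alpha$ is known to be real and nonzero, the same ball fixes its sign. The real interval $B\cap\mathbb{R}$ cannot contain $0$, because $|\alpha|\geq 2^{-eH}>2\varepsilon$, which exceeds the diameter of $B$. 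The sign of $\alpha$ is therefore the sign of $\operatorname{Re} c$.

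The precision used is $eH+2=O(eH)$ bits, and the remaining arithmetic compares dyadic numbers of that length. The running time is therefore polynomial in $e$ and $H$ plus the cost of one ball evaluation.

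The only step with real content is the height-to-separation inequality. The main thing to watch is the normalisation of the absolute logarithmic Weil height, which is per-degree, and whether the degree entering the bound is $n\leq e$. Any constant-factor loss, for example from complex places, is harmless, since it only changes $O(eH)$ by a constant, so I will not optimise it.
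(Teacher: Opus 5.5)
Your proposal is correct and follows the paper's proof: the Liouville bound $|\alpha|\geq\exp(-[\Q(\alpha):\Q]\,h(\alpha))\geq 2^{-eH}=:\delta$, one ball evaluation of radius $\delta/4$, a comparison of the centre's modulus with $\delta/2$, and the sign read off from the real part of the centre. The only difference is that you sketch the Liouville inequality via the product formula, with the local-degree factor $n_{v_0}\in\{1,2\}$ handled correctly, whereas the paper cites Bombieri--Gubler for it.
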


\begin{proof}
The Liouville inequality~\cite[Chapter~1]{bombieri-gubler} gives, for
$\alpha\neq0$ and in every embedding,
$|\alpha|\geq\exp(-[\Q(\alpha):\Q]\,h(\alpha))\geq2^{-eH}=:\delta$.
Thus $O(eH)$ bits of precision suffice: approximate $\alpha$ with
absolute error less than $\delta/4$ and
compare the centre's modulus with $\delta/2$.  If $\alpha$ is real and
nonzero, the real part of the centre has its sign.
\end{proof}

In every application below, $e$ and $H$ are polynomially bounded in
the input size and the ball evaluation runs in time polynomial in the
input size and the requested precision, so each invocation of the lemma
takes deterministic polynomial time; we use it in this form without
further comment.

\begin{lemma}[Orbit bound for a block sum]\label{lem:orbit}
Let $K/\Q$ be a Galois extension with group $G$, let $I$ be a finite set
on which $G$ acts, and let $\{\alpha_i\}_{i\in I}\subset K$ be an
\emph{equivariant family}: $\sigma(\alpha_i)=\alpha_{\sigma\cdot i}$ for
all $\sigma\in G$ and all $i\in I$.  For nonempty $J\subseteq I$ define the block
sum $\beta_J=\sum_{i\in J}\alpha_i$.  Then every conjugate of $\beta_J$
over $\Q$ is again a block sum:
\[
 \sigma(\beta_J)=\beta_{\sigma(J)}
 \qquad(\sigma\in G).
\]
Consequently
\[
 [\Q(\beta_J):\Q]
 =|G\cdot\beta_J|
 \leq\bigl|\{\sigma(J):\sigma\in G\}\bigr|,
 \qquad
 h(\beta_J)\leq\sum_{i\in J}h(\alpha_i)+(|J|-1)\log2.
\]
In particular, if $J$ is a block of a $G$-invariant partition of $I$ into
$m$ parts, then $[\Q(\beta_J):\Q]\leq m\leq|I|$.
\end{lemma}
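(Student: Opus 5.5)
The plan has three steps: compute conjugates of $\beta_J$ directly from equivariance, count the orbit to bound the degree, and control the height with the standard subadditivity estimates.

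\emph{Conjugates.} Fix $\sigma\in G$. Since $\sigma$ is a field automorphism of $K$, it is additive, and the equivariance hypothesis gives
\[
 \sigma(\beta_J)=\sum_{i\in J}\sigma(\alpha_i)=\sum_{i\in J}\alpha_{\sigma\cdot i}.
\]
The map $i\mapsto\sigma\cdot i$ is a bijection of $I$ (its inverse is $\sigma^{-1}$), so it maps $J$ bijectively onto $\sigma(J)$. Reindexing the sum therefore yields $\beta_{\sigma(J)}$.

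\emph{Degree.} Because $K/\Q$ is Galois and $\beta_J\in K$, the conjugates of $\beta_J$ over $\Q$ are exactly the elements of the orbit $G\cdot\beta_J$. Hence $[\Q(\beta_J):\Q]=|G\cdot\beta_J|$. The map $\sigma(J)\mapsto\beta_{\sigma(J)}=\sigma(\beta_J)$ is well defined and surjects the set $\{\sigma(J):\sigma\in G\}$ onto this orbit. It follows that $|G\cdot\beta_J|\le|\{\sigma(J)\}|$.

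For the final clause, suppose $J$ is a block of a $G$-invariant partition into $m$ parts. Then every $\sigma(J)$ is itself a block, so at most $m$ distinct sets arise. Since the blocks are nonempty and disjoint, $m\le|I|$.

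\emph{Height.} I will use the standard properties of the absolute logarithmic Weil height: $h(x+y)\le h(x)+h(y)+\log2$, and more generally
\[
 h\Bigl(\sum_{i=1}^n x_i\Bigr)\le\sum_{i=1}^n h(x_i)+\log n.
\]
Both follow from the local inequalities $\max(1,|x+y|_v)\le 2^{\epsilon_v}\max(1,|x|_v)\max(1,|y|_v)$, where $\epsilon_v=1$ at archimedean places and $0$ otherwise, after summing over places with the normalized weights. Iterating the two-term bound $|J|-1$ times, by induction on $|J|$, gives exactly $h(\beta_J)\le\sum_{i\in J}h(\alpha_i)+(|J|-1)\log2$; the $\log n$ version would be sharper, but the stated bound suffices.

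None of these steps is a real obstacle. The only points needing care are that the degree is computed over $\Q$ via the orbit in the Galois closure $K$, and that the height bound is the cited standard estimate rather than something to be re-derived.
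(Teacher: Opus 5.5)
Your proof is correct and follows the paper's argument step for step. You reindex through the bijection $i\mapsto\sigma\cdot i$ to get $\sigma(\beta_J)=\beta_{\sigma(J)}$, identify the conjugates with the $G$-orbit in the Galois extension $K$, bound that orbit by the number of images $\sigma(J)$ (at most $m$ for a block of a $G$-invariant partition), and iterate $h(x+y)\le h(x)+h(y)+\log 2$ for the height; your place-by-place derivation of that inequality and the sharper $\log|J|$ remark are harmless additions to the standard estimate the paper simply cites.
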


\begin{proof}
Equivariance gives
$\sigma(\beta_J)=\sum_{i\in J}\alpha_{\sigma\cdot i}=\beta_{\sigma(J)}$,
since $i\mapsto\sigma\cdot i$ is a bijection of $I$.  Because $K/\Q$ is
Galois, the conjugates of $\beta_J$ are exactly its $G$-orbit (its minimal
polynomial is $\prod_{\beta\in G\cdot\beta_J}(X-\beta)$), contained
in the block sums $\{\beta_{\sigma(J)}:\sigma\in G\}$.  The orbit--degree
correspondence gives the displayed bound, and a $G$-invariant partition
has only $m$ blocks to which the images $\sigma(J)$ can belong.  The
height inequality is $h(x+y)\leq h(x)+h(y)+\log2$ applied $|J|-1$
times~\cite[Chapter~1]{bombieri-gubler}.
\end{proof}

Note the contrast with the two naive degree bounds.  A sum of
$|J|$ algebraic numbers of degree at most $e$ can require its compositum,
of degree up to $e^{|J|}$, and the only bound available inside a common
Galois field $K$ is $[K:\Q]$, which for the splitting field of a
degree-$d$ polynomial can be as large as $d!$.  Equivariance replaces both
by the number of blocks, because the conjugates of the sum are constrained
to respect the same block structure: the terms are handled locally, in
their own fields, while the symmetry argument certifies globally that
their sum has low degree.  Together with Lemma~\ref{lem:certified} this
gives a reusable testing primitive.

\begin{corollary}[Certified tests for block sums]\label{cor:blocktest}
In the situation of Lemma~\ref{lem:orbit}, suppose that a $G$-invariant
partition of $I$ into at most $m$ blocks is explicitly given, that $m$
and $|I|$ are polynomially bounded in the input size, that each
$\alpha_i$ can be evaluated by certified complex-ball arithmetic in time
polynomial in the input size and the requested precision, and that an
integer $H$, polynomially bounded in the input size, is computable with
$h(\alpha_i)\leq H\log2$ for every $i$.  Then in deterministic polynomial
time one can decide, for every block $J$, whether $\beta_J=0$ and, when
$\beta_J$ is known to be real, its sign.
\end{corollary}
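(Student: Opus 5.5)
The plan is to combine Lemma~\ref{lem:orbit} with Lemma~\ref{lem:certified} block by block. Fix a block $J$ of the given partition. We apply Lemma~\ref{lem:orbit} to bound the degree of $\beta_J$. Since the partition is $G$-invariant with at most $m$ blocks, we get $[\Q(\beta_J):\Q]\leq m$, so we take $e=m$.

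For the height, Lemma~\ref{lem:orbit} gives
\[
 h(\beta_J)\leq\sum_{i\in J}h(\alpha_i)+(|J|-1)\log2\leq |J|\,H\log2+|J|\log2 .
\]
We therefore set $H'=|I|(H+1)$. This is an integer, it is polynomially bounded because $|I|$ and $H$ are, and it satisfies $h(\beta_J)\leq H'\log2$.

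Next we supply the ball oracle required by Lemma~\ref{lem:certified}. Given $\varepsilon>0$, evaluate each $\alpha_i$ with $i\in J$ to a ball of radius at most $\varepsilon/|J|$. Then add the balls using exact rational centres and radii. The result is a ball of radius at most $\varepsilon$ that contains $\beta_J$. The precision requested per summand is $\log_2(1/\varepsilon)+\log_2|I|$, so one evaluation of the $\beta_J$-oracle costs at most $|I|$ times a polynomial in the input size and the requested precision. This stays polynomial.

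Lemma~\ref{lem:certified} then decides whether $\beta_J=0$, and gives its sign when $\beta_J$ is real. It uses $O(eH')=O(m|I|(H+1))$ bits of precision, which is polynomial. Its running time is polynomial in $e$, $H'$ and one oracle call at that precision. Iterating over the at most $m$ blocks multiplies the cost by a polynomial factor, so the whole procedure runs in deterministic polynomial time.

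The only point needing care is the height bookkeeping. The bound must not depend on the possibly huge field $K$ or on $|G|$, and it does not: Lemma~\ref{lem:orbit} bounds both degree and height purely in terms of $|J|\leq|I|$, the number of blocks, and the individual heights. The Galois extension $K$ is used only in the proof, never in the computation. A second point is the way the blocks are presented. We never compute $\beta_J$ symbolically, only through the given explicit partition and the individual $\alpha_i$.
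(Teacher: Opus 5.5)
Your proposal is correct and matches the paper's proof: you take the degree bound $m$ and the height bound $|I|(H+1)\log 2$ from Lemma~\ref{lem:orbit}, add the certified balls of the summands, and apply Lemma~\ref{lem:certified} to the sum. The only difference is that you spell out the per-summand precision ($\varepsilon/|J|$) and the loop over blocks, which the paper leaves implicit.
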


\begin{proof}
Lemma~\ref{lem:orbit} supplies $[\Q(\beta_J):\Q]\leq m$ and
$h(\beta_J)\leq|I|(H+1)\log2$; both bounds are polynomially bounded in
the input size, hence so is their product.
Certified balls for the terms are added, and Lemma~\ref{lem:certified}
applies to the sum.
\end{proof}

\section{Torsion classes and the grouped expansion}\label{sec:expansion}

Let $\Lambda^\times$ be the set of distinct nonzero eigenvalues of $A$.
For $\lambda,\gamma\in\Lambda^\times$, write $\lambda\sim\gamma$ when
$\lambda/\gamma$ is a root of unity; the equivalence classes are called
\emph{torsion classes}.

\begin{lemma}[Torsion classes and modulus]\label{lem:torsion}
In deterministic polynomial time one can compute the torsion classes and
an even positive integer $D$, written in binary with polynomially many
bits, such that (i) $\lambda^D=\gamma^D$ if and only if
$\lambda\sim\gamma$, and (ii) every real number among the values
$\lambda^D$ is strictly positive.
\end{lemma}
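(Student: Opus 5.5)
We first factor the characteristic polynomial $\chi_A$ over $\Q$ into irreducible factors $f_1,\dots,f_s$ using LLL~\cite{lll}; the nonzero roots of these factors are exactly the elements of $\Lambda^\times$. The main algebraic input is the following fact. If $\lambda/\gamma=\zeta$ is a root of unity with $\lambda,\gamma$ of degree at most $d$, then $\zeta\in\Q(\lambda,\gamma)$, which has degree at most $d^2$. Hence the order $r$ of $\zeta$ satisfies $\varphi(r)\le d^2$, so $r\le R:=O(d^2\log\log d)$, and in particular $r\le \poly(d)$. The set of possible orders is therefore a small, explicit set.

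\emph{The modulus.} Let
\[
 D := 2\cdot\operatorname{lcm}\{1,2,\dots,R\}.
\]
This is even, and $\log D=O(R)$, so $D$ has polynomially many bits. For (i), if $\lambda\sim\gamma$ then the order of $\lambda/\gamma$ is at most $R$ and divides $D$, so $\lambda^D=\gamma^D$. Conversely, $\lambda^D=\gamma^D$ says $(\lambda/\gamma)^D=1$, so $\lambda\sim\gamma$. For (ii), suppose $\lambda^D$ is real. Then $\lambda^D=|\lambda|^D e^{iD\theta}$ with $e^{iD\theta}=\pm1$. Thus $\bar\lambda/\lambda=e^{-2i\theta}$ satisfies $(\bar\lambda/\lambda)^D=1$, so it is a root of unity. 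Its quotient is formed from two eigenvalues ($\bar\lambda$ is also an eigenvalue), so its order $r$ is at most $R$. Hence $r\mid D/2$, which gives $e^{iD\theta}=(e^{2i\theta})^{D/2}=1$ and $\lambda^D>0$. The doubling in $D$ is exactly what handles the case $\lambda$ real negative, where $r=1$ or $2$. I would state the argument for general $\lambda$ via $\bar\lambda/\lambda$ as above.

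\emph{Computing the classes.} We must decide, for each pair of eigenvalues, whether $\lambda/\gamma$ is a root of unity of order at most $R$, without computing $\lambda^D$. The plan is to do this factor-pair by factor-pair. For irreducible factors $f,g$ and each $r\le R$, consider the polynomial
\[
 g_r(X):=\operatorname{Res}_Y\bigl(f(Y),\,g(XY)\bigr),
\]
whose roots are the quotients $\gamma/\lambda$. We test whether the cyclotomic polynomial $\Phi_r$ divides $g_r$. This tells us whether some pair $(\lambda,\gamma)$ has $\gamma/\lambda$ a primitive $r$th root of unity. The degrees are $\le d^2$, and the heights are polynomially bounded by resultant bounds, so this step takes polynomial time.

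To identify \emph{which} pairs are related, I would use numerics. For each $\lambda$ and each candidate $r\le R$, we check whether $\lambda\zeta_r^j$ is a root of $g$ by isolating roots to sufficient precision. Root separation bounds for $fg$-type polynomials of polynomial height make a $2^{-\poly}$ precision sufficient, so equality of two given algebraic numbers is decided as in~\cite{yap}. Equivalently, we can directly test whether $|\lambda|=|\gamma|$ and whether $\lambda^{r}=\gamma^{r}$ via a certified zero test for $\lambda^r-\gamma^r$. This is an algebraic number of degree $\le d^2$ and height $O(r\,h)$, so Lemma~\ref{lem:certified} applies with $r\le R$ polynomial. Looping over all pairs and all $r\le R$ yields the classes in polynomial time.

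The main obstacle is the first step: the bound $\varphi(r)\le d^2$ and the resulting polynomial bound on $r$. Everything else, including the zero tests of $\lambda^r-\gamma^r$ through Lemma~\ref{lem:certified}, depends on that bound.
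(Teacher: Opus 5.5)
Your argument is correct, but it chooses the modulus differently from the paper.

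\textbf{The paper's route.} The paper first runs the pairwise certified tests $\lambda^r-\gamma^r=0$ for $r\leq 2d^4$. It records the exact order of every torsion ratio (the least successful $r$) and sets $D=2L$, where $L$ is the lcm of the orders actually found. Property (ii) then needs a bookkeeping step: $\overline\lambda$ is also an eigenvalue, so the order of $\lambda/\overline\lambda$ is among the recorded orders and divides $L$.

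\textbf{Your route.} You take the universal modulus $D=2\operatorname{lcm}\{1,\dots,R\}$, which depends only on $d$. Then (i) and (ii) follow from the totient bound alone, independently of the class computation. That computation then only has to decide \emph{whether} a ratio is torsion, not its exact order.

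\textbf{What each choice buys.} The grouped coefficients $B_{C,s}$ do not involve $D$. The rest of the paper uses only evenness, (i), (ii) and polynomial bit length. So your $D$ would serve equally well downstream. The paper's choice gives a spectrum-adapted $D$ that is often tiny ($D=4$ and $D=6$ in Appendix~\ref{app:example}). This keeps concrete quantities such as the Frobenius threshold small.

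\textbf{Minor points.}
\begin{itemize}
\item To actually compute your $D$, the algorithm needs an explicit $R$. The bare asymptotic $O(d^2\log\log d)$ does not supply one unless you cite an explicit version (e.g.\ Rosser--Schoenfeld). The paper's elementary inequality $\varphi(r)\geq\sqrt{r/2}$ gives $R=2d^4$, hence $\log D=O(d^4)$ bits, which is enough.
\item For a negative real $\lambda$ the ratio $\overline\lambda/\lambda$ equals $1$, so $r=1$ there. The evenness of $D$ is what does the work, not an order-$2$ ratio.
\item The resultant test with $\Phi_r$ is superfluous. The direct certified zero test of $\lambda^r-\gamma^r$, which you give as the alternative, is exactly the paper's procedure.
\end{itemize}
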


\begin{proof}[Proof sketch]
A quotient $\lambda/\gamma$ has algebraic degree at most $d^2$, so if it
is a root of unity of order $r$ then $\varphi(r)\leq d^2$, and the
elementary bound $\varphi(r)\geq\sqrt{r/2}$ gives $r\leq2d^4$.  The order
can therefore be found by testing $\lambda^r=\gamma^r$ for
$1\leq r\leq2d^4$ using Lemma~\ref{lem:certified}; the pairwise decisions
determine the classes.  Take $D=2L$ for the least common multiple $L$ of the orders
found; then $\log D=O(d^2\log d)$.  The factor $2$ turns real $D$th
powers positive: if $\lambda^D$ is real and $\lambda$ is not, then
$\lambda^m$ is real for the order $m$ of $\lambda/\overline\lambda$, which
divides $L$, and $\lambda^D=(\lambda^m)^{2L/m}>0$.  Full details are in
Appendix~\ref{app:torsion}.
\end{proof}

For $\lambda\in\Lambda^\times$, let $E_\lambda$ denote the spectral
projector onto the generalized $\lambda$-eigenspace, and set
$H_{\lambda,s}=(A-\lambda I)^sE_\lambda$ for $0\leq s<d$; this vanishes
once $s$ reaches the largest $\lambda$-Jordan block.  For a torsion class
$C$, define
\begin{equation}\label{eq:B-definition}
 B_{C,s}=\sum_{\lambda\in C}\lambda^{1-s}H_{\lambda,s},
 \qquad
 b_{C,s}^{ij}=(B_{C,s})_{ij},
\end{equation}
and write $\mu_C=\lambda^D$ for any $\lambda\in C$---the \emph{class
root} of $C$---well-defined by Lemma~\ref{lem:torsion}.  The form of the expansion below is what makes the progression tractable:
at $n=Dk+1$ the exponential factor separates as
$\lambda^{Dk+1-s}=(\lambda^D)^k\lambda^{1-s}$, so the huge power appears
only in the common base $\mu_C^k$, which is never constructed, and the
coefficient is the $k$-independent quantity $b_{C,s}^{ij}$
of~\eqref{eq:B-definition}.  The same separation holds at any offset $c$,
with $\lambda^{c-s}$ in place of $\lambda^{1-s}$; the offset $1$ is
chosen because consecutive exponents are then coprime
(Lemma~\ref{lem:coprime-residue}), and the offset $0$ reappears in
Corollary~\ref{cor:existspower}.

We use the following standard uniqueness fact
(proof in Appendix~\ref{app:exppoly}).

\begin{lemma}[Uniqueness of exponential--polynomial forms]\label{lem:exppoly}
Let $\nu_1,\ldots,\nu_m$ be distinct nonzero complex numbers and let
$P_1,\ldots,P_m$ be complex polynomials.  If
$w_k=\sum_{\ell=1}^m P_\ell(k)\,\nu_\ell^k$ vanishes for every
sufficiently large integer $k$, then every $P_\ell$ is identically zero.
Consequently a sequence has at most one such representation with distinct
nonzero bases and nonzero coefficient polynomials, up to reordering.
\end{lemma}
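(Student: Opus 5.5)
The plan is to prove the first assertion by induction on $m$; the uniqueness statement then follows by taking differences. The main tool will be a difference operator that removes one base. For $m=1$, the equation $P_1(k)\nu_1^k=0$ with $\nu_1\neq0$ gives $P_1(k)=0$ for all large $k$. A polynomial with infinitely many zeros is identically zero, so $P_1=0$.

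For the inductive step, suppose $w_k=\sum_\ell P_\ell(k)\nu_\ell^k$ vanishes for $k\geq k_0$. I will induct on the degree of $P_m$ in an inner induction, and apply the operator
\[
 (\Delta w)_k = w_{k+1}-\nu_m w_k .
\]
This operator sends $P_\ell(k)\nu_\ell^k$ to $\bigl(\nu_\ell P_\ell(k+1)-\nu_m P_\ell(k)\bigr)\nu_\ell^k$. For $\ell\neq m$ the new coefficient $\tilde P_\ell$ has the same degree and the same leading coefficient times $\nu_\ell-\nu_m\neq0$, so $\tilde P_\ell\neq0$ whenever $P_\ell\neq0$. For $\ell=m$ the coefficient becomes $\nu_m(P_m(k+1)-P_m(k))$, whose degree drops by one; a constant $P_m$ is annihilated.

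Applying $\Delta$ exactly $\deg P_m+1$ times removes the $m$th term entirely. The result is a sequence $\sum_{\ell<m}\hat P_\ell(k)\nu_\ell^k$ that still vanishes for large $k$. Moreover $\hat P_\ell$ is nonzero if and only if $P_\ell$ is, since each application preserves the degree and multiplies the leading coefficient by $\nu_\ell-\nu_m$. The outer induction hypothesis gives $\hat P_\ell=0$ for all $\ell<m$, hence $P_\ell=0$ for $\ell<m$. The original sum then reduces to $P_m(k)\nu_m^k=0$ for large $k$, and the base case gives $P_m=0$. If $P_m$ is already zero the step is vacuous, so I will treat $\deg P_m$ as $-1$ in that case and skip applying $\Delta$.

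For uniqueness, suppose there are two representations with distinct nonzero bases and nonzero polynomial coefficients. I merge the two base sets and subtract, padding with zero polynomials where a base appears in only one representation. The difference vanishes identically, so by the first part every coefficient difference is zero. Consequently the two base sets agree, since a base present in only one representation would carry a nonzero polynomial minus zero, and the polynomials attached to each base coincide.

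No step is a real obstacle. The one point to check carefully is that $\Delta$ preserves the nonvanishing of the other coefficients, which holds because $\nu_\ell\neq\nu_m$ keeps the leading coefficient nonzero.
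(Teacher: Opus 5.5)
Your proposal is correct and follows the paper's proof. The paper likewise applies $w_k\mapsto w_{k+1}-\nu_m w_k$ exactly $\deg P_m+1$ times, which removes the $\nu_m$-term while preserving the degree of every other $P_\ell$ because $\nu_\ell\neq\nu_m$, and then finishes by induction on $m$; you additionally spell out the base case, the equivalence $\hat P_\ell\neq0 \Leftrightarrow P_\ell\neq0$ via the factor $(\nu_\ell-\nu_m)^{\deg P_m+1}$, and the merge-and-subtract derivation of the uniqueness clause, all of which the paper leaves implicit (and the inner induction on $\deg P_m$ you announce is not actually needed).
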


The next lemma defines a closed form for the entry sequence along the progression. The classical spectral expansion writes each entry of $A^n$ as
$\sum_\lambda\sum_s\binom ns\lambda^{n-s}(H_{\lambda,s})_{ij}$: one term
per eigenvalue, with the binomial factors recording the contribution of
nontrivial Jordan blocks.  Substituting $n=Dk+1$ and separating the
exponential factor as above merges all eigenvalues of a
torsion class into a single term, since they share the base
$\mu_C=\lambda^D$.  The result is an exponential--polynomial expression
in $k$ whose bases are the class roots---nondegenerate by the choice of
$D$---and whose coefficients are exactly the grouped coefficients
$b_{C,s}^{ij}$ of~\eqref{eq:B-definition}.  By
Lemma~\ref{lem:exppoly}, this representation is the unique one for the
entry sequence along the progression, so the $b_{C,s}^{ij}$ are intrinsic
to the sequence rather than artifacts of the construction: the sign
criterion of the next section can safely be phrased in terms of them.

\begin{lemma}[Grouped expansion]\label{lem:expansion}
For every $k$ with $Dk+1\geq d$,
\begin{equation}\label{eq:grouped-expansion}
 (A^{Dk+1})_{ij}
 =\sum_C Q_C^{ij}(k)\mu_C^k,
 \qquad
 Q_C^{ij}(X)=\sum_{s=0}^{d-1}\binom{DX+1}{s}b_{C,s}^{ij}.
\end{equation}
The distinct class roots $\mu_C$ form a nondegenerate set.  Moreover,
$Q_C^{ij}=0$ if and only if $b_{C,s}^{ij}=0$ for all $s$; and if
$r=\max\{s:b_{C,s}^{ij}\neq0\}$, then the leading coefficient of
$Q_C^{ij}$ is $D^rb_{C,r}^{ij}/r!$.
\end{lemma}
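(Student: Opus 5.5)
I will start from the classical spectral decomposition of $A^n$.  Write $A=\sum_{\lambda}(\lambda E_\lambda + N_\lambda)$ over all eigenvalues, including possibly $0$, where $N_\lambda=(A-\lambda I)E_\lambda$ is nilpotent and commutes with $E_\lambda$.  The binomial theorem then gives
\[
 A^n=\sum_\lambda\sum_{s\ge0}\binom ns\lambda^{n-s}H_{\lambda,s}.
\]
For $\lambda=0$ the summand is $N_0^n$, which vanishes once $n\geq d$; this is why the hypothesis $Dk+1\geq d$ appears.  For $\lambda\neq0$ the sum over $s$ stops before $d$.

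Next I substitute $n=Dk+1$ and split off the exponential factor, $\lambda^{Dk+1-s}=(\lambda^D)^k\lambda^{1-s}$.  By Lemma~\ref{lem:torsion}, $\lambda^D=\mu_C$ for every $\lambda$ in the class $C$.  Grouping the sum by torsion classes and taking the $(i,j)$ entry produces~\eqref{eq:grouped-expansion}, with the definition~\eqref{eq:B-definition} of $b^{ij}_{C,s}$.

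Nondegeneracy of the class roots also comes from Lemma~\ref{lem:torsion}.  Suppose $\mu_C/\mu_{C'}$ is a root of unity $\zeta$ with $\zeta^r=1$.  Then $(\lambda^r/\gamma^r)^D=1$, so $(\lambda/\gamma)^{rD}=1$ and hence $\lambda\sim\gamma$.  By part (i) of that lemma this forces $\mu_C=\mu_{C'}$.  So distinct class roots never have a root-of-unity quotient.

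For the polynomial claims I use the fact that $\binom{DX+1}{s}$ has degree exactly $s$ in $X$, because $D\geq1$, with leading coefficient $D^s/s!$.  The polynomials $\binom{DX+1}{0},\ldots,\binom{DX+1}{d-1}$ therefore have distinct degrees and are linearly independent.  Two consequences follow.  First, $Q^{ij}_C=0$ holds exactly when all $b^{ij}_{C,s}$ vanish.  Second, if $r$ is the largest index with $b^{ij}_{C,r}\neq0$, the top-degree term of $Q^{ij}_C$ comes only from $s=r$, so the leading coefficient is $D^r b^{ij}_{C,r}/r!$.

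The only step needing care is the nilpotent part at eigenvalue $0$, together with confirming that the truncation $s\leq d-1$ loses nothing.  For the $0$ eigenvalue I check that its generalized eigenspace has dimension at most $d$, so its nilpotent part satisfies $N_0^d=0$.  For the truncation I check that $H_{\lambda,s}=0$ for $s\geq d$, and that the binomial coefficients with $s>n$ vanish automatically.  With these two checks everything else is bookkeeping.
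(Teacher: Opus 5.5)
Your proposal is correct and follows the paper's proof essentially step for step. Both arguments use the same ingredients: the binomial expansion on each generalized eigenspace, the substitution $n=Dk+1$ with the split $\lambda^{Dk+1-s}=(\lambda^D)^k\lambda^{1-s}$, the vanishing of the nilpotent $0$-part once $Dk+1\geq d$, nondegeneracy of the class roots via Lemma~\ref{lem:torsion}(i), and the distinct-degree property of the basis $\binom{DX+1}{s}$ with leading coefficients $D^s/s!$.
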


\begin{proof}
On the generalized $\lambda$-eigenspace the binomial theorem gives
$A^nE_\lambda=\sum_s\binom ns\lambda^{n-s}H_{\lambda,s}$; substitute
$n=Dk+1$ and group those $\lambda$ with equal $D$th powers to
obtain~\eqref{eq:grouped-expansion}.  The generalized $0$-eigenspace is
nilpotent of index at most $d$ and vanishes for $Dk+1\geq d$.  If
$\mu_C/\mu_{C'}$ were a root of unity, representatives $\lambda\in C$ and
$\gamma\in C'$ would satisfy $\lambda\sim\gamma$, contrary to $C\neq C'$.
Finally, the polynomials $\binom{DX+1}{s}$ have distinct degrees and
leading coefficients $D^s/s!$, so the top index $r$ with
$b_{C,r}^{ij}\neq0$ cannot be cancelled by lower-index terms.
\end{proof}

The scalars $b_{C,s}^{ij}$ are called the \emph{grouped coefficients} of
the entry sequence, since each one sums the spectral contributions of a
whole torsion class.  Call a class $C$ \emph{$(i,j)$-active} if some
$b_{C,s}^{ij}$ is nonzero,
and \emph{$(i,j)$-dominant} if its modulus is maximal among the active
classes.  Equivalently, $\mu_C$ is a dominant characteristic root of the
entry LRS.

\section{A structural criterion}\label{sec:criterion}

We use the following classical consequence of Berstel's theorem for
rational series over the nonnegative rationals.

\begin{lemma}[Nonnegative rational sequences]\label{lem:berstel}
Let $(u_n)$ be a nondegenerate rational LRS, not eventually zero, which
admits a linear representation $u_n=x^\top M^ny$ with $x,M,y$ entrywise
nonnegative and rational.  Then $(u_n)$ has a unique characteristic root
of maximum modulus.
\end{lemma}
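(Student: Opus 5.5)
The plan is to derive the lemma from Berstel's theorem on the poles of $\Q_+$-rational series~\cite{berstel-1971,berstel-reutenauer}. That theorem says that if $f(z)=\sum_n u_nz^n$ is the generating series of a sequence with a nonnegative linear representation, then among the poles of $f$ of minimal modulus $\rho$ the following hold: $\rho$ itself is a pole, and every pole of modulus $\rho$ has the form $\rho\omega$ with $\omega$ a root of unity. I would translate this into the language of characteristic roots and then use nondegeneracy to rule out every root except the positive real one.

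The first step is to pass from the sequence to its rational generating function. Write $u_n=\sum_\ell P_\ell(n)\nu_\ell^n$ for $n$ large, using the unique representation of Lemma~\ref{lem:exppoly}, with distinct nonzero bases and nonzero polynomials $P_\ell$. Then the poles of $f(z)=x^\top(I-zM)^{-1}y$ are exactly the points $1/\nu_\ell$. A finite number of initial terms only contributes a polynomial, which creates no poles. Because $u$ is not eventually zero, there is at least one base. The poles of minimal modulus therefore correspond to the characteristic roots of maximal modulus $R=\max|\nu_\ell|>0$.

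Next I would apply Berstel's theorem. It gives, first, that $R$ is itself a characteristic root. This is a Pringsheim-type statement, since the coefficients of $f$ are nonnegative. Second, every root of modulus $R$ equals $R\omega$ for some root of unity $\omega$. Now let $\nu$ be any root of modulus $R$. Then $\nu/R=\omega$ is a root of unity, and $R$ and $\nu$ are both characteristic roots. Nondegeneracy says that no ratio of two distinct roots is a root of unity, so $\nu=R$. Hence the dominant root is unique, and it is the positive real number $R$.

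The main obstacle is the careful citation of Berstel's theorem in the right form. Its standard statement is for $\N$- or $\Q_+$-rational series, phrased through poles of minimal modulus, and I would check that it applies to the representation $x^\top M^ny$ with nonnegative rationals. One could also argue directly. Pringsheim's theorem gives that $R$ is a pole. For the root-of-unity claim, one would need Berstel's decomposition into a finite union of arithmetic progressions, along each of which the sequence has a single dominant pole. If the citation had to be replaced, I would use that route: Perron--Frobenius on the block-triangular Frobenius normal form of $M$. There, the peripheral eigenvalues of each irreducible block are $R$ times roots of unity, because of the cyclic structure of an irreducible nonnegative matrix. Any dominant characteristic root of $u$ must be among these eigenvalues, and nondegeneracy then forces uniqueness as above.
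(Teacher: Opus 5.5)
Your proposal is correct and follows the paper's route: Berstel's theorem places the minimal-modulus poles of the generating series at $\rho\omega$, with $\rho>0$ itself a pole and $\omega$ a root of unity. These poles are the reciprocals of the dominant characteristic roots, and nondegeneracy leaves only the positive real root $R$.

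The applicability check you flag as the main obstacle is settled in the paper by scaling. If $N$ is a common denominator for $x$, $M$, $y$, then $N^{n+2}u_n=(Nx)^\top(NM)^n(Ny)$ is $\N$-rational. Its characteristic roots are $N$ times those of $u$, which preserves both nondegeneracy and uniqueness of the dominant root.

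Your Perron--Frobenius fallback, if you ever needed it, is incomplete as written: you must first trim $M$ to the states reachable from the support of $x$ and co-reachable to the support of $y$. Otherwise an irreducible block of spectral radius larger than $R$ is not excluded, and the step ``any dominant root lies among peripheral eigenvalues of the form $R\omega$'' does not follow.
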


This is Proposition~6 of~\cite{dcosta-ow}, deriving from Berstel's
theorem~\cite{berstel-1971,berstel-reutenauer}: for a series with a
nonnegative realization, the poles of least modulus occur in a finite
periodic pattern, so the dominant roots of the recurrence differ by roots
of unity, and nondegeneracy leaves only one.  Excluding eventually-zero
sequences is essential: their generating series are polynomials, the
exceptional pole-free case of Berstel's statement; we detect that branch
separately as case (1) below.  Although Berstel's theorem is stated
for $\N$-rational series, the nonnegative-rational case reduces to it by
scaling: if $N$ is a common denominator for the entries of $x$, $M$, and
$y$, then $N^{n+2}u_n=(Nx)^\top(NM)^n(Ny)$ is $\N$-rational, and the
scaling multiplies every characteristic root by $N$, affecting neither
nondegeneracy nor the uniqueness of a maximum-modulus root.

\begin{theorem}[Exact criterion]\label{thm:criterion}
The matrix $A$ is eventually nonnegative if and only if, for every pair
$(i,j)$, one of the following holds:
\begin{enumerate}
\item $b_{C,s}^{ij}=0$ for every $C,s$; or
\item there is exactly one $(i,j)$-dominant class $C_*$, and, writing
      $r=\max\{s:b_{C_*,s}^{ij}\neq0\}$, one has $b_{C_*,r}^{ij}>0$.
\end{enumerate}
Here classes are compared using $|\lambda|$ for any representative
$\lambda\in C$, which is well-defined because torsion ratios have modulus
$1$.  In the second case $C_*$ is automatically stable under
complex conjugation, so $b_{C_*,r}^{ij}$ is real; moreover
$\mu_{C_*}>0$.
\end{theorem}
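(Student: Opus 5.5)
By Lemma~\ref{lem:coprime-residue}, it suffices to characterize when $A^{Dk+1}\geq0$ for all large $k$. This holds exactly when each entry sequence $w_k=(A^{Dk+1})_{ij}$ is eventually nonnegative. By Lemma~\ref{lem:expansion}, $w_k=\sum_C Q_C^{ij}(k)\mu_C^k$ with distinct nonzero bases $\mu_C$. By Lemma~\ref{lem:exppoly}, this representation is unique, so the active classes are precisely the characteristic roots of $(w_k)$.

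\emph{Sufficiency.} In case (1), $w_k=0$ for large $k$. In case (2), I first prove the side claims. Since $w_k$ is real, complex conjugation of the expansion together with uniqueness (Lemma~\ref{lem:exppoly}) shows that $\overline{C}$ is active whenever $C$ is, with conjugate coefficients and $|\mu_{\overline C}|=|\mu_C|$. Uniqueness of the dominant class therefore forces $\overline{C_*}=C_*$. Hence $\mu_{C_*}$ is real, so $\mu_{C_*}>0$ by Lemma~\ref{lem:torsion}(ii), and $b^{ij}_{C_*,r}$ is real. Then
\[
w_k=\mu_{C_*}^k\Bigl(\tfrac{D^r b^{ij}_{C_*,r}}{r!}k^r+O(k^{r-1})\Bigr)+O\bigl(k^{d}\rho^k\bigr)
\]
with $\rho<\mu_{C_*}$, so $w_k>0$ eventually. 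Since there are finitely many pairs $(i,j)$, $A$ is eventually nonnegative.

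\emph{Necessity.} Suppose $A$ is eventually nonnegative, fix $(i,j)$, and assume case (1) fails. Then $(w_k)$ is not eventually zero, because Lemma~\ref{lem:exppoly} shows that some $Q_C^{ij}\ne 0$ survives. Choose $k_0$ with $A^n\geq0$ for all $n\geq N_0:=Dk_0+1$. Set $v_m=(A^{N_0(1+Dm)})_{ij}=e_i^\top A^{N_0}(A^{N_0D})^me_j$. This is a $\Q_+$-rational sequence. Since $N_0(1+Dm)=D(k_0+N_0 m)+1$, we have $v_m=w_{k_0+N_0m}=\sum_C Q_C^{ij}(k_0+N_0m)(\mu_C^{k_0})(\mu_C^{N_0})^m$.

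This is the main step: transferring nondegeneracy and activity to $v_m$. The bases $\mu_C^{N_0}$ are distinct and pairwise non-torsion, because a torsion ratio $(\mu_C/\mu_{C'})^{N_0}$ would make $\mu_C/\mu_{C'}$ torsion, contradicting Lemma~\ref{lem:expansion}. The coefficient polynomials $Q_C^{ij}(k_0+N_0X)\mu_C^{k_0}$ are nonzero exactly for the active $C$. Hence, by Lemma~\ref{lem:exppoly}, $(v_m)$ is a nondegenerate rational LRS, not eventually zero, whose characteristic roots are the $\mu_C^{N_0}$ for active $C$. Rationality holds because the entries lie in $\Q$.

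Lemma~\ref{lem:berstel} then gives a unique active class of maximal $|\mu_C^{N_0}|$, equivalently of maximal $|\lambda|$, which is $C_*$. The conjugation argument above again gives $\mu_{C_*}>0$ with a real leading coefficient. Since $w_k\geq0$ for $k\geq k_0$ and the asymptotics are $w_k\sim \tfrac{D^r}{r!}b^{ij}_{C_*,r}k^r\mu_{C_*}^k$, a negative leading coefficient would make $w_k<0$ eventually. Thus $b^{ij}_{C_*,r}>0$, which is case (2).
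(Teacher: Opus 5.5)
Your proof is correct and follows the paper's own argument. The side claims come from conjugating the grouped expansion and invoking uniqueness, sufficiency is the dominant-term asymptotics plus Lemma~\ref{lem:coprime-residue}, and necessity samples a $\Q_+$-rational subsequence along an arithmetic progression in $k$, then applies Berstel's lemma once nondegeneracy and activity are seen to survive the affine substitution.

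One small difference: you sample at $v_m=w_{k_0+N_0m}$, as in the paper's introduction, whereas the appendix samples at $u_{t(m+1)}$ with $Dt\geq d$ so that the expansion holds from $m=0$. This is immaterial, since only eventual behaviour matters and you may enlarge $k_0$ freely.
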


\begin{proof}[Proof sketch]
Fix an entry $u_k=(A^{Dk+1})_{ij}$.  Conjugating
\eqref{eq:grouped-expansion} and applying Lemma~\ref{lem:exppoly}
termwise gives $\overline{b_{C,s}^{ij}}=b_{\overline C,s}^{ij}$; hence a
unique maximum-modulus active class $C_*$ is conjugation-stable, so
$\mu_{C_*}>0$ by Lemma~\ref{lem:torsion} and every $b_{C_*,s}^{ij}$ is
real.

Sufficiency is the dominant-term argument: with one active class of
strictly largest modulus, $u_k/(k^r\mu_{C_*}^k)\to
D^rb_{C_*,r}^{ij}/r!>0$, so $u_k>0$ eventually; taking the maximum
threshold over entries and applying Lemma~\ref{lem:coprime-residue} gives
eventual nonnegativity.

For necessity, eventual nonnegativity must exclude ties between distinct
classes of maximal modulus, which are a priori possible.  Choose $t$ so
large that $A^{Dt}$ and $A^{Dt+1}$ are both nonnegative, and sample the
entry sequence at $k=t(m+1)$: the sequence
$v_m=e_i^\top A^{Dt+1}(A^{Dt})^me_j$ has a nonnegative rational
realization, so Lemma~\ref{lem:berstel} forces a unique dominant root,
and passing to the subsequence merges no nondegenerate roots.  Hence $u$
itself has one dominant class, and its eventual nonnegativity forces the
leading grouped coefficient to be positive.  The full proof is in
Appendix~\ref{app:criterion}.
\end{proof}

The criterion is essentially the effective characterization
of~\cite{dcosta-ow}: the LRS-theoretic core of the necessity argument
is Propositions~4--6 there.  What is new is its phrasing on the single
progression $Dk+1$ in terms of the grouped coefficients
$b_{C,s}^{ij}$---exactly what the algorithm tests---and that this form
can be checked in polynomial time, via the reduction to one residue
class and Galois equivariance in place of the splitting field.

A small instance: the symmetric matrix $A_\star$ of
Appendix~\ref{app:example}, with eigenvalues $11/10$, $1$, $-1$, has
torsion classes $\{11/10\}$ and $\{1,-1\}$, and $D=4$.  For every entry
the class $\{11/10\}$ is the unique dominant one and its leading grouped
coefficient is $11/30>0$, so the criterion accepts---even though the
degenerate pair $1,-1$ makes individual powers of $A_\star$ carry
negative entries up to exponent $7$.

\section{Computing the grouped coefficients}\label{sec:grouped}

The apparent obstruction is that $D$ can be exponentially large as a
value.  In general $h(\lambda^D)=D\,h(\lambda)$, so constructing $A^D$,
any $\lambda^D$, or the splitting field of $\chi_A$ (of degree up to $d!$)
can require exponentially many bits.  We now show that none of these objects is needed.
The computation follows the pipeline
\[
 \lambda
 \;\longrightarrow\;
 \text{torsion class }C
 \;\longrightarrow\;
 T_{\lambda,s}^{ij}\in\Q(\lambda)
 \;\longrightarrow\;
 b_{C,s}^{ij}=\textstyle\sum_{\lambda\in C}T_{\lambda,s}^{ij}
 \;\longrightarrow\;
 \text{dominant class},
\]
with each arrow computable in polynomial time; here
$T_{\lambda,s}^{ij}=\lambda^{1-s}\bigl((A-\lambda I)^sE_\lambda\bigr)_{ij}$
denotes the contribution of the single eigenvalue $\lambda$ to the
grouped coefficient $b_{C,s}^{ij}$, introduced formally
in~\eqref{eq:T-definition} below.

\subsection{Spectral projectors over one root field}
Let $\chi_A(X)$ be the characteristic polynomial.  If $\lambda$ has
algebraic multiplicity $m_\lambda$, set
$q_\lambda(X)=\chi_A(X)/(X-\lambda)^{m_\lambda}\in\Q(\lambda)[X]$, and let
$h_\lambda$ be the inverse of $q_\lambda$ modulo
$(X-\lambda)^{m_\lambda}$, computed by its truncated Taylor expansion at
$\lambda$.  Put $p_\lambda=q_\lambda h_\lambda$.

\begin{lemma}[Explicit projector]\label{lem:projector}
The spectral projector is $E_\lambda=p_\lambda(A)$.  All coefficients of
$p_\lambda$ and all entries of $E_\lambda$ lie in $\Q(\lambda)$ and have
exact representations of polynomial size and polynomial logarithmic height
in the input size.
\end{lemma}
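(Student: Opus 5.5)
The identity $E_\lambda=p_\lambda(A)$ is the Chinese-remainder description of spectral projectors, and I will prove it algebraically; the size claims follow by tracking degrees and heights through the construction of $p_\lambda$.

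\textbf{Projector identity.} Factor $\chi_A=(X-\lambda)^{m_\lambda}q_\lambda$ over $\Q(\lambda)$. By construction $p_\lambda=q_\lambda h_\lambda$ satisfies
\[
 p_\lambda\equiv1 \pmod{(X-\lambda)^{m_\lambda}},
 \qquad
 p_\lambda\equiv0 \pmod{q_\lambda}.
\]
On the generalized $\lambda$-eigenspace $\ker(A-\lambda I)^{m_\lambda}$, the operator $p_\lambda(A)$ therefore acts as the identity. On each generalized $\gamma$-eigenspace with $\gamma\neq\lambda$, the factor $(X-\gamma)^{m_\gamma}$ divides $q_\lambda$, so $p_\lambda(A)$ acts as zero. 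Since $\mathbb C^d$ is the direct sum of the generalized eigenspaces (Cayley--Hamilton), $p_\lambda(A)$ is the projector onto the $\lambda$-part along the others, which is $E_\lambda$. Because $\chi_A\in\Q[X]$ and $\lambda\in\Q(\lambda)$, all Taylor coefficients of $q_\lambda$ and $h_\lambda$ at $\lambda$ lie in $\Q(\lambda)$. Hence $p_\lambda\in\Q(\lambda)[X]$ and $E_\lambda\in\Q(\lambda)^{d\times d}$.

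\textbf{Computation.} First factor $\chi_A$ over $\Q$ with LLL~\cite{lll}. The minimal polynomial $f$ of $\lambda$ has degree at most $d$ and polynomial height. Its multiplicity $m_\lambda$ is the exponent of $f$ in $\chi_A$, which is well defined since $\chi_A$ has rational coefficients. Next compute $q_\lambda$ by exact division of $\chi_A$ by $(X-\lambda)^{m_\lambda}$ in $\Q(\lambda)[X]$, reducing modulo $f$. The inverse $h_\lambda$ modulo $(X-\lambda)^{m_\lambda}$ comes from the truncated power-series inverse in $Y=X-\lambda$: expand $q_\lambda(\lambda+Y)=c_0+c_1Y+\cdots$ with $c_0=q_\lambda(\lambda)\neq0$, and invert up to order $Y^{m_\lambda}$. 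Each step takes $O(d^2)$ field operations in $\Q(\lambda)$. Finally evaluate $p_\lambda(A)$ by Horner's rule with matrices over $\Q(\lambda)$, using $O(d)$ multiplications of $d\times d$ matrices.

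\textbf{Size bounds (the main obstacle).} Counting field operations does not by itself rule out coefficient blow-up in the power-basis representation, and controlling this growth is the delicate part. I would avoid tracking intermediate swell and instead bound the output directly.
\begin{itemize}
\item \emph{Bounding $c_0^{-1}$.} The only division is by $c_0=q_\lambda(\lambda)=\chi_A^{(m_\lambda)}(\lambda)/m_\lambda!$. Its inverse is a polynomial in $\lambda$ with rational coefficients obtained from a resultant with $f$. Hadamard and Mignotte bounds give that numerators and denominators have polynomially many bits.
\item \emph{Bounding the remaining coefficients.} The coefficients $c_t$ and the entries of $A^t$ are integer polynomial expressions in $\lambda$ and the entries of $A$, of degree at most $d$ and with binomial coefficients at most $2^d$. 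They therefore have polynomial height after clearing one common denominator: a power of the denominator of $A$ times the leading coefficient of $f$ raised to the power $d$.
\item \emph{Heights of the entries.} Alternatively, the entries of $E_\lambda$ are values at $\lambda$ of a fixed bounded-degree rational expression. Using $h(x+y)\le h(x)+h(y)+\log2$ and $h(xy)\le h(x)+h(y)$, as in Lemma~\ref{lem:orbit}, over polynomially many operations gives polynomial logarithmic height.
\item \emph{From height to bit size.} Converting a height bound for an element of $\Q(\lambda)$ into a bit bound on its power-basis coordinates uses the inverse Vandermonde in the conjugates of $\lambda$. This costs only $O(d\,h(\lambda)+d\log d)$ extra bits, via the discriminant of $f$.
\end{itemize}
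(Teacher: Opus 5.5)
Your proof of $E_\lambda=p_\lambda(A)$ is the paper's Chinese-remainder argument and is correct, and so is the computational outline. The gap is in the size bounds, at the point you yourself flag as the main obstacle: the higher coefficients of the inverse series $h_\lambda$ are never bounded. Bullet~1 controls only $c_0^{-1}$. Bullet~2 controls only the Taylor coefficients $c_t$ of $q_\lambda$ and the powers $A^t$. The coefficients $c'_r$ of $h_\lambda$, given by $c'_0=c_0^{-1}$ and $c'_r=-c_0^{-1}\sum_{t=1}^r c_tc'_{r-t}$, involve $c_0^{-(r+1)}$ and a sum over all $2^{r-1}$ compositions of $r$, and that is where growth can hide.

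Your fallback in bullet~3 applies $h(x+y)\le h(x)+h(y)+\log 2$ and $h(xy)\le h(x)+h(y)$ over polynomially many operations. This does not give a polynomial bound, because operation count does not control height (think of repeated squaring). Along this recursion the resulting bound satisfies $H_r\ge\sum_{s<r}H_s\ge 2^{r-1}H_0$. That is exponential in $m_\lambda$, which can be as large as $d-1$. The same inequality charges a sum of $N$ terms $N$ times the largest height. So expanding over compositions is exponential too, as is your Horner evaluation of $p_\lambda(A)$, where each step sums $d$ products. Reading the entries as values of one rational function of $\lambda$ does not help either: it just moves the problem into bounding that function's coefficients.

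The paper closes this gap by writing all the equations $\sum_{t=0}^r c_tc'_{r-t}=\delta_{r0}$, $0\le r<m_\lambda$, simultaneously in the power basis of $\Q(\lambda)$. This gives one rational linear system of dimension at most $d^2$ with polynomial-bit coefficients. Cramer's rule and Hadamard's bound then give polynomial-bit coordinates for every inverse coefficient at once, and the remaining evaluation of $p_\lambda$ and $p_\lambda(A)$ has bounded multiplicative depth.

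You can also repair your own route. The quantity $c_0^{r+1}c'_r$ is an integer polynomial in $c_0,\dots,c_r$, homogeneous of degree $r$ and of length at most $2^{r-1}$. Substituting the expressions from bullet~2 gives a polynomial in $\lambda$ of degree at most $d^2$ with polynomial-bit coefficients. Multiplying by $(c_0^{-1})^{r+1}$ from bullet~1 keeps everything polynomial. Equivalently, use the length form $h(P(\alpha_1,\dots,\alpha_n))\le\log L(P)+\sum_i\deg_{x_i}(P)\,h(\alpha_i)$ for integer polynomials $P$ of length $L(P)$, rather than pairwise inequalities.

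If you keep bullet~4, note that a height bound controls the sizes of the conjugates but not the denominators of the power-basis coordinates. Those need a separate bound, from the leading coefficient of the element's minimal polynomial together with the discriminant of an integral generator.
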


\begin{proof}[Proof sketch]
By construction $p_\lambda\equiv1\pmod{(X-\lambda)^{m_\lambda}}$ and
$p_\lambda\equiv0\pmod{(X-\gamma)^{m_\gamma}}$ for every other eigenvalue
$\gamma$, so $p_\lambda(A)$ is the Chinese-remainder idempotent acting as
the identity on every $\lambda$-Jordan block and as zero elsewhere.  The
Taylor inverse is computed only in the degree-at-most-$d$ field
$\Q(\lambda)$: expanded in a power basis, its defining equations form one
rational linear system of dimension at most $d^2$ with polynomial-bit
coefficients, so Gaussian elimination gives polynomial-bit coordinates.
This is the basis-free form of the root-field-by-root-field computation in
Cai's exact Jordan algorithm~\cite{cai}.  Details are in
Appendix~\ref{app:projector}.
\end{proof}

It follows that every summand
\begin{equation}\label{eq:T-definition}
 T_{\lambda,s}^{ij}
 =\bigl(\lambda^{1-s}(A-\lambda I)^sE_\lambda\bigr)_{ij}
 \in\Q(\lambda)
\end{equation}
has a polynomial-size exact representation.  The certified tests
need one further ingredient.  To set its working precision,
Lemma~\ref{lem:certified} must be \emph{given} an integer $H$ bounding
the height of the number under test.  We therefore compute, for each
summand, an explicit integer $\widehat H_{\lambda,s}^{ij}$ with
\[
 h(T_{\lambda,s}^{ij})\leq\widehat H_{\lambda,s}^{ij}\log2 .
\]
This height certificate is assembled from two standard estimates.
Mignotte's factor bound controls the minimal polynomials of the
eigenvalues.  Height inequalities for sums and products then control
the arithmetic that builds $T_{\lambda,s}^{ij}$ from
them~\cite{bombieri-gubler,yap}.  The resulting
$\widehat H_{\lambda,s}^{ij}$ is polynomially bounded in the input
size; the calculation is in Appendix~\ref{app:heights}.

\subsection{Low degree of the class sum}
By~\eqref{eq:B-definition} and~\eqref{eq:T-definition}, each grouped
coefficient is a \emph{class sum} of the local contributions:
\begin{equation}\label{eq:classsum}
 b_{C,s}^{ij}=\sum_{\lambda\in C}T_{\lambda,s}^{ij},
\end{equation}
a sum over one torsion class of terms that may lie in different root
fields.  

\begin{lemma}[Grouped coefficient lemma]\label{lem:grouped}
Let $K$ be a splitting field of $\chi_A$ and $G=\Gal(K/\Q)$.  For every
$\sigma\in G$, the class sums satisfy the equivariance law
$\sigma(B_{C,s})=B_{\sigma(C),s}$.  For every torsion class $C$, every
$s<d$, and every $(i,j)$,
\[
 [\Q(b_{C,s}^{ij}):\Q]\leq d,
 \qquad
 h(b_{C,s}^{ij})\leq\poly(\langle A\rangle,d).
\]
If $C$ is stable under complex conjugation, then $b_{C,s}^{ij}$ is real.
\end{lemma}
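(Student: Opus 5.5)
The plan is to prove the equivariance law first and then obtain the degree and height bounds from Lemma~\ref{lem:orbit} with $I=\Lambda^\times$ and the torsion classes as blocks.

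\textbf{Equivariance.} Fix $\sigma\in G$; it permutes the eigenvalues of $A$, and I claim $\sigma(E_\lambda)=E_{\sigma(\lambda)}$. Indeed, $p_\lambda$ from Lemma~\ref{lem:projector} is characterized by two congruences: $p_\lambda\equiv1\pmod{(X-\lambda)^{m_\lambda}}$, and $p_\lambda\equiv0$ modulo $(X-\gamma)^{m_\gamma}$ for every other eigenvalue $\gamma$. Applying $\sigma$ to the coefficients gives a polynomial satisfying the same congruences at $\sigma(\lambda)$, since $\sigma$ fixes $\chi_A$ and hence $m_{\sigma(\lambda)}=m_\lambda$. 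Since $A$ is rational, $\sigma(p_\lambda(A))=(\sigma p_\lambda)(A)=E_{\sigma(\lambda)}$. Consequently
\[
\sigma(\lambda^{1-s}H_{\lambda,s})=\sigma(\lambda)^{1-s}H_{\sigma(\lambda),s}.
\]
Thus the family $\alpha_\lambda=T^{ij}_{\lambda,s}$, indexed by $I=\Lambda^\times$, is equivariant. Next, $\sigma$ maps torsion classes to torsion classes, because $\sigma(\lambda/\gamma)$ is a root of unity whenever $\lambda/\gamma$ is. Summing the displayed identity over $\lambda\in C$ therefore gives $\sigma(B_{C,s})=B_{\sigma(C),s}$.

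\textbf{Degree and height.} The torsion classes form a $G$-invariant partition of $I$ into at most $|I|\le d$ blocks. Lemma~\ref{lem:orbit} then gives $[\Q(b^{ij}_{C,s}):\Q]\le d$ and
\[
h(b^{ij}_{C,s})\le\sum_{\lambda\in C}h(T^{ij}_{\lambda,s})+(d-1)\log2 .
\]
It remains to bound each $h(T^{ij}_{\lambda,s})$ polynomially. I expect this to be the main obstacle, since $s$ can be as large as $d-1$ and $\lambda^{1-s}$ may be a negative power. My first attempt is the height certificate $\widehat H^{ij}_{\lambda,s}$ promised before the lemma, taken from Appendix~\ref{app:heights}. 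Concretely, Mignotte's bound together with $h(\lambda)\le\log M(\chi_A)/\deg\lambda$ gives $h(\lambda)=\poly$. Next, $h(\lambda^{1-s})\le|1-s|\,h(\lambda)$, using $h(\lambda^{-1})=h(\lambda)$. The entries of $(A-\lambda I)^s$ are sums of at most $d^{s}$ products of $s$ terms of polynomial height. The entries of $E_\lambda$ have polynomial height by Lemma~\ref{lem:projector}. Then the product and sum rules $h(xy)\le h(x)+h(y)$ and $h(\sum_{t=1}^N x_t)\le\sum_t h(x_t)+\log N$ keep everything at $\poly(\langle A\rangle,d)$, with $s<d$ throughout.

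\textbf{Reality.} Complex conjugation restricted to $K$ is an element of $G$, and it fixes $C$ as a set when $C$ is conjugation-stable. The equivariance law then gives $\overline{B_{C,s}}=B_{C,s}$, so every entry $b^{ij}_{C,s}$ is real.
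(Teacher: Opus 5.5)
\textbf{Equivariance, degree and reality.} These parts are correct, and you reach them by a different route from the paper's. The paper never proves equivariance of individual summands. It conjugates the grouped expansion of the rational sequence $(A^{Dk+1})_{ij}$ and invokes uniqueness of exponential--polynomial forms (Lemma~\ref{lem:exppoly}) to get $\sigma(b_{C,s}^{ij})=b_{\sigma(C),s}^{ij}$. It then applies Lemma~\ref{lem:orbit} to the family $\{b_{C,s}^{ij}\}_C$ with singleton blocks. You instead prove the finer fact $\sigma(E_\lambda)=E_{\sigma(\lambda)}$ from the congruences defining $p_\lambda$. Those congruences fix $p_\lambda$ only modulo $\chi_A$, but that suffices by Cayley--Hamilton, or directly by the Jordan-form argument of Lemma~\ref{lem:projector}. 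You then apply Lemma~\ref{lem:orbit} with $I=\Lambda^\times$ and the torsion classes as blocks, which is exactly the situation its final clause was written for. Your route avoids Lemma~\ref{lem:exppoly} and gives termwise information. The paper's route is intrinsic to the sequence and does not depend on how the coefficients are computed.

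\textbf{Height bound: the gap.} Citing $\widehat H_{\lambda,s}^{ij}$ from Appendix~\ref{app:heights} is what the paper does. However, the derivation you sketch does not give a polynomial bound. You write an entry of $(A-\lambda I)^s$ as a sum of about $d^{s-1}$ products and apply $h(\sum_t x_t)\le\sum_t h(x_t)+\log N$. This bounds the entry by roughly $d^{s-1}\cdot s\cdot\poly$. Computing the power iteratively is no better, since the bound is multiplied by $d$ at each step. Because $s$ ranges up to $d-1$, the result is exponential in $d$. The generic sum rule is additive in the heights of the summands, so it may only be nested a bounded number of times.

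\textbf{How to repair it.} Apply the sum rule to a single polynomial expression in $\lambda$. For instance, $((A-\lambda I)^s)_{ij}=\sum_{t=0}^{s}\binom st(-\lambda)^{s-t}(A^t)_{ij}$ has at most $d$ summands. Each is a rational number of polynomial bit size times a power of $\lambda$ of height $(s-t)h(\lambda)$. Alternatively, follow Appendix~\ref{app:heights}: compute $T_{\lambda,s}^{ij}=g(\lambda)/q$ exactly in the power basis, where coordinate bit sizes grow only additively over $O(d)$ multiplications. Then use $h(g(\lambda)/q)\le\log\max(1,\lVert g\rVert_1)+\deg g\,h(\lambda)+\log q$.

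\textbf{A smaller slip.} The bound $h(\lambda)\le\log M(\chi_A)/\deg\lambda$ fails for non-integral $\lambda$. For example, with $\chi_A=X-\tfrac12$ we have $M(\chi_A)=1$ but $h(\tfrac12)=\log 2$. Use $M(c\chi_A)$ instead, with $c$ a common denominator such that $c\chi_A\in\Z[X]$.
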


\begin{proof}
Automorphisms permute the eigenvalues and preserve the torsion
relation, so $G$ permutes the classes together with their class roots:
$\sigma(\mu_C)=\sigma(\lambda)^D=\mu_{\sigma(C)}$.  The equivariance
law now follows by conjugating the grouped expansion, exactly as in the
complex-conjugation step of the proof of Theorem~\ref{thm:criterion}.
Fix $(i,j)$ and apply $\sigma$ to the
identity~\eqref{eq:grouped-expansion}: the left-hand side
$(A^{Dk+1})_{ij}$ is rational, hence fixed, while the right-hand side
becomes $\sum_C(\sigma Q_C^{ij})(k)\,\mu_{\sigma(C)}^k$, since $\sigma$
fixes the integer $k$ and the rational coefficients of the binomial
polynomials.  This is a second exponential--polynomial representation
of the same sequence, with distinct nonzero bases, so
Lemma~\ref{lem:exppoly} forces
$\sigma(Q_C^{ij})=Q_{\sigma(C)}^{ij}$; comparing coefficients in the
basis $\binom{DX+1}{s}$ gives
$\sigma(b_{C,s}^{ij})=b_{\sigma(C),s}^{ij}$, and hence entrywise
$\sigma(B_{C,s})=B_{\sigma(C),s}$.  

For the degree bound, apply Lemma~\ref{lem:orbit} to the family
$\{b_{C,s}^{ij}\}_C$, indexed by the at most $d$ torsion classes with
the $G$-action just described and partitioned into singletons: every
conjugate of $b_{C,s}^{ij}$ is again one of the at most $d$ class
sums, so $[\Q(b_{C,s}^{ij}):\Q]\leq d$.  The height bound follows
from~\eqref{eq:classsum}, the bounds of~\eqref{eq:T-definition}, and
$h(x+y)\leq h(x)+h(y)+\log2$, as carried out in
Appendix~\ref{app:heights}.  Complex conjugation is one of the
automorphisms, so a conjugation-stable class has real coefficients.
\end{proof}

\begin{lemma}[Zero and sign tests for grouped coefficients]\label{lem:separation}
There is a deterministic polynomial-time algorithm which, given
$C,s,i,j$, decides whether $b_{C,s}^{ij}=0$ and, whenever $b_{C,s}^{ij}$
is real, its sign.
\end{lemma}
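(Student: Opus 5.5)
The plan is to apply Corollary~\ref{cor:blocktest}, or equivalently Lemma~\ref{lem:certified} directly, to the number $\alpha=b_{C,s}^{ij}$. Both the degree bound $e$ and the height bound $H$ it requires come from Lemma~\ref{lem:grouped}. The remaining ingredient is a certified ball evaluation of $b_{C,s}^{ij}$ at any requested precision in polynomial time. The argument then proceeds in three steps.

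\emph{Step one: the data of the class.} First compute the torsion classes using Lemma~\ref{lem:torsion}. This gives the list of eigenvalues $\lambda\in C$, each represented by its minimal polynomial and an isolating disc. There are at most $d$ of them.

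\emph{Step two: exact and numerical evaluation of the summands.} For each $\lambda\in C$, Lemma~\ref{lem:projector} gives $E_\lambda$ exactly in the power basis of $\Q(\lambda)$. From it we form $T_{\lambda,s}^{ij}$ of \eqref{eq:T-definition} with at most $d$ further matrix multiplications in $\Q(\lambda)$, plus one multiplication by $\lambda^{1-s}$. Note that $s<d$, so $\lambda^{1-s}$ is a small power; the inverse is taken in $\Q(\lambda)$ and never involves $D$. The result is a polynomial of degree below $\deg\lambda$ in $\lambda$ with polynomial-size rational coefficients. To obtain a ball of radius $\varepsilon$ around it:
\begin{itemize}
\item refine the isolating disc of $\lambda$ to precision $\varepsilon'$ polynomially related to $\varepsilon$, which takes polynomial time by \cite{lll,yap};
\item evaluate the polynomial in ball arithmetic, where the loss of precision is controlled by the coefficient size and by $|\lambda|\le 2^{\poly}$;
\item sum the at most $d$ balls to obtain a ball for $b_{C,s}^{ij}$ via \eqref{eq:classsum}.
\end{itemize}

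\emph{Step three: the certified test.} Set $e=d$. Let $H$ be the explicit integer bounding the height of $b_{C,s}^{ij}$, assembled from the per-summand certificates $\widehat H_{\lambda,s}^{ij}$ as $H=\sum_{\lambda\in C}\widehat H_{\lambda,s}^{ij}+|C|$. This is valid by Lemma~\ref{lem:orbit} (equivalently, $h(x+y)\le h(x)+h(y)+\log 2$) and is polynomially bounded. Lemma~\ref{lem:certified} then decides whether $b_{C,s}^{ij}=0$ using $O(dH)$ bits of precision.

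For the sign, we need to know when $b_{C,s}^{ij}$ is real. We can decide whether $C$ is conjugation-stable by checking, for each $\lambda\in C$, that $\overline\lambda$ belongs to $C$, a comparison of algebraic numbers. If $C$ is stable, the number is real by Lemma~\ref{lem:grouped}, and the real part of the ball centre gives the sign.

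The statement says ``whenever real'', which also covers non-stable classes whose coefficient happens to be real. This case is handled as follows:
\begin{itemize}
\item $b_{C,s}^{ij}$ is real iff its imaginary part $\frac{1}{2i}(b_{C,s}^{ij}-\overline{b_{C,s}^{ij}})=\frac{1}{2i}(b_{C,s}^{ij}-b_{\overline C,s}^{ij})$ vanishes, using $\overline{b_{C,s}^{ij}}=b_{\overline C,s}^{ij}$.
\item This imaginary part is a block sum over the $G$-invariant family of pairs of classes. By Lemma~\ref{lem:orbit} it has degree at most $d^2$ and polynomial height, so its vanishing is again decidable by Lemma~\ref{lem:certified}.
\item If it vanishes, the sign follows from the real part of the centre as before.
\end{itemize}

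The main obstacle is step two: guaranteeing that the ball evaluation is polynomial in the requested precision. This needs a bound on how much precision in $\lambda$ is required for the power-basis polynomial. That bound follows from the coefficient bit-sizes of Lemma~\ref{lem:projector} and a Cauchy bound on $|\lambda|$. Everything else is assembly of earlier lemmas.
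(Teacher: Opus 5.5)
Your proposal is correct and follows the paper's proof: compute each $T_{\lambda,s}^{ij}$ exactly in its own field $\Q(\lambda)$ with all divisions done symbolically, evaluate it division-free on a refined isolating disc, sum the balls over the class, and apply the certified test of Lemma~\ref{lem:certified} with degree bound $d$ from Lemma~\ref{lem:grouped} and height bound from the summed per-summand certificates. The only departure is your extra reality test for classes that are not conjugation-stable, which is harmless but unnecessary: once $b_{C,s}^{ij}\neq0$ is certified, the real part of the ball centre has the correct sign whenever $b_{C,s}^{ij}$ is real, which is all the statement requires (and if you did want such a test, checking $b_{C,s}^{ij}-b_{\overline C,s}^{ij}=0$ is cleaner than dividing by $2i$, which need not lie in the splitting field).
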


\begin{proof}
This is Corollary~\ref{cor:blocktest} with the torsion classes as
the index set: by Lemma~\ref{lem:grouped}, the family
$\{b_{C,s}^{ij}\}_C$, indexed by the at most $d$ classes, is
equivariant, and we take the (trivially $G$-invariant) partition of
this index set into singletons, so that the block sums are the grouped
coefficients themselves.  The height hypothesis is supplied by the
integers $\widehat H_{C,s}^{ij}$ of Appendix~\ref{app:heights}.  For
the evaluation hypothesis, $b_{C,s}^{ij}$ is evaluated through its
class, stored as a list of isolated roots.  Each contribution
$T_{\lambda,s}^{ij}$ is first computed \emph{exactly}, as
$T_{\lambda,s}^{ij}=g(\lambda)/q$ with $g\in\Z[X]$ and $q\in\Z_{>0}$
(Lemma~\ref{lem:projector} and Appendix~\ref{app:heights}); all
divisions, including those by $\lambda$ and by $q_\lambda(\lambda)$ in
the projector formula, happen at this exact symbolic stage.  The
numerical stage is division-free: evaluate the integer polynomial $g$
on a refined isolating disc for $\lambda$, divide the resulting ball
by the integer $q$, and sum over the class.  Since $g$ and $q$ have
polynomial bit length, $p$ correct bits of $b_{C,s}^{ij}$ cost time
polynomial in the input size and $p$, and
Corollary~\ref{cor:blocktest} concludes.
\end{proof}

An illustration of the phenomenon being tested: the
companion matrix
$R=\begin{psmallmatrix}0&0&2\\1&0&0\\0&1&0\end{psmallmatrix}$ of $X^3-2$
has eigenvalues $\sqrt[3]2$, $\omega\sqrt[3]2$,
$\overline\omega\sqrt[3]2$, forming a single torsion class but lying in
three distinct cubic fields.  In the $(1,1)$ entry the three local
contributions $\lambda/3$, each evaluated in its own field $\Q(\lambda)$,
cancel exactly, while in the $(1,3)$ entry they sum to $2$; the
certified test detects the cancellation by summing balls computed
independently in the three embeddings, never constructing the degree-$6$
splitting field.  We work through the example in more detail in Appendix~\ref{app:example}.

\section{The algorithm}\label{sec:algorithm}

The input is a matrix $A\in\Q^{d\times d}$, each entry given by its
binary encoding; no further assumptions are made---$A$ may be singular,
reducible, or non-diagonalizable.  The output is the decision whether
$A$ is eventually nonnegative.  Throughout, $D$ and the torsion classes
$C$ are those of Lemma~\ref{lem:torsion}, and the grouped coefficients
$b_{C,s}^{ij}$ are those of~\eqref{eq:B-definition}; the algorithm
tests the criterion of Theorem~\ref{thm:criterion} entry by entry.

\begin{enumerate}
\item Compute the distinct nonzero eigenvalues, their multiplicities, and
      the torsion classes (Lemma~\ref{lem:torsion}), stored as isolating
      discs grouped by class.
\item For every $C,s,i,j$, decide whether $b_{C,s}^{ij}=0$
      (Lemma~\ref{lem:separation}).
\item Compare the moduli of active classes by testing
      $\lambda\overline\lambda-\gamma\overline\gamma$ for class
      representatives: the difference is real, has degree at most $d^4$
      and polynomial height, so Lemma~\ref{lem:certified} decides its zero
      case and its sign.  No conjugate $\overline\lambda$ is isolated as
      an algebraic number: refining the isolating disc of $\lambda$
      certifies $|\lambda|$ directly, and the degree and height bounds
      enter only through the precision required by
      Lemma~\ref{lem:certified}.
\item For each entry $(i,j)$: accept the entry if no class is active;
      otherwise reject it unless there is a unique active class $C_*$ of
      maximum modulus; take the largest $s$ with $b_{C_*,s}^{ij}\neq0$ and
      accept the entry exactly when $b_{C_*,s}^{ij}>0$.  No conjugation
      check is needed: uniqueness forces $C_*=\overline{C_*}$
      (Theorem~\ref{thm:criterion}), so the coefficient is real and
      Lemma~\ref{lem:separation} decides its sign.
\item Accept $A$ exactly when every entry is accepted.
\end{enumerate}

Correctness is Theorem~\ref{thm:criterion}.  There are polynomially many
roots, classes, indices, and Jordan degrees, and every subcomputation is
polynomial by Lemmas~\ref{lem:torsion}, \ref{lem:projector},
\ref{lem:grouped}, and~\ref{lem:separation}.  No $D$th power of a matrix
or algebraic number, and no global splitting field, is constructed.  Hence
the running time is polynomial in the bit length of $A$, which proves
Theorem~\ref{thm:main}.

\begin{remark}[Eventual positivity is also in P]\label{rem:positivity}
Strict positivity is likewise closed under products, so the same
algorithm, with case (1) of Theorem~\ref{thm:criterion} rejected, decides
eventual \emph{positivity} in deterministic polynomial time: a
self-contained test, independent of the Perron--Frobenius characterization
of~\cite{noutsos}.
\end{remark}

\begin{remark}[A valid threshold has polynomially many bits]\label{rem:threshold}
The algorithm does not compute the least eventuality threshold, which
depends on the exact finite transient.  On a yes-instance it can
nevertheless certify \emph{a} threshold $N$ with polynomially many
bits, i.e.\ $A^n\geq0$ for every $n\geq N$: the degree and height
bounds of Section~\ref{sec:grouped} yield an explicit crossover bound
$K=2^{\poly}$ with $A^{Dk+1}\geq0$ for all $k\geq K$, and the Frobenius
bound applied to the coprime pair $DK+1$, $D(K+1)+1$ then converts $K$
into $N$.  The calculation is in Appendix~\ref{app:threshold}.
\end{remark}

The last result of this section concerns the mere \emph{existence}
of a nonnegative power.  Matrices with at least one nonnegative power
are called \emph{power nonnegative}; their spectral structure, and its
relation to eventual nonnegativity, is studied by Tudisco, Cardinali,
and Di Fiore~\cite{tudisco-cdf}.  The decision procedure rests on a
structural fact worth isolating:
a single nonnegative power, however large its exponent, already forces
nonnegativity along the whole progression $Dk$---and the modulus $D$ is
determined by the spectrum of $A$ alone (Lemma~\ref{lem:torsion}), not
by the witness.

\begin{lemma}[One nonnegative power forces the whole progression]\label{lem:onepower}
If $A^{n_0}\geq0$ for some $n_0\geq1$, then $A^{Dk}\geq0$ for all
sufficiently large $k$.
\end{lemma}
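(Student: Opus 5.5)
The plan is to run the necessity half of the proof of Theorem~\ref{thm:criterion} again, at offset $0$ instead of offset $1$. The nonnegative witness $A^{n_0}$ takes the place of the pair $A^{Dt},A^{Dt+1}$ used there. Fix an entry $(i,j)$ and put $u_k=(A^{Dk})_{ij}$. Lemma~\ref{lem:expansion} also holds at offset $0$, as the paper remarks before it: replace $\lambda^{1-s}$ by $\lambda^{-s}$ and $\binom{DX+1}{s}$ by $\binom{DX}{s}$. This gives, for $Dk\geq d$,
\[
 u_k=\sum_C \widetilde Q_C(k)\,\mu_C^k ,
\]
where the class roots $\mu_C$ are nondegenerate, and $\widetilde Q_C$ is a polynomial whose coefficients are grouped sums over the class $C$. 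Conjugating this identity and applying Lemma~\ref{lem:exppoly} gives $\overline{\widetilde Q_C}=\widetilde Q_{\overline C}$, exactly as in the proof of Theorem~\ref{thm:criterion}.

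Next, sample along the multiples of $n_0$. Set
\[
 w_m=u_{n_0 m}=e_i^\top\bigl((A^{n_0})^D\bigr)^m e_j .
\]
This has a nonnegative rational realization, since $(A^{n_0})^D\geq 0$. Its closed form is $\sum_C \widetilde Q_C(n_0m)\,(\mu_C^{n_0})^m$, and the polynomial $\widetilde Q_C(n_0X)$ vanishes exactly when $\widetilde Q_C$ does. The new bases $\mu_C^{n_0}$ are still pairwise distinct and nondegenerate: a relation $(\mu_C/\mu_{C'})^{n_0}=1$ would make $\mu_C/\mu_{C'}$ a root of unity, which Lemma~\ref{lem:expansion} rules out. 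So by Lemma~\ref{lem:exppoly} the active classes of $w$ and of $u$ coincide.

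There are now two cases.
\begin{itemize}
\item If no class is active, then $u_k=0$ for all large $k$.
\item Otherwise $w$ is not eventually zero, and Lemma~\ref{lem:berstel} gives it a unique characteristic root of maximum modulus. Hence $u$ has a unique dominant active class $C_*$. By the conjugation symmetry above, $C_*=\overline{C_*}$, so $\widetilde Q_{C_*}$ has real coefficients. Lemma~\ref{lem:torsion}(ii) then gives $\mu_{C_*}>0$.
\end{itemize}

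The step I expect to need the most care is the sign of the leading coefficient. Let $c\neq0$ be the (real) leading coefficient of $\widetilde Q_{C_*}$, of degree $r$. Since $C_*$ strictly dominates,
\[
 \frac{w_m}{(n_0m)^r\mu_{C_*}^{n_0m}}\longrightarrow c .
\]
Every $w_m$ is nonnegative, so $c\geq 0$, and therefore $c>0$. The same dominant-term estimate applied to $u_k$ itself gives $u_k/(k^r\mu_{C_*}^k)\to c>0$, so $u_k>0$ for all large $k$. Taking the maximum threshold over the finitely many entries $(i,j)$ shows that $A^{Dk}\geq0$ for all sufficiently large $k$.

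The point to check carefully is that nothing about $u$ is lost in passing to the subsequence $k=n_0m$. This is exactly what the non-merging of bases and the uniqueness in Lemma~\ref{lem:exppoly} guarantee. The role of the factor $2$ in $D$ is to force $\mu_{C_*}>0$, so that the dominant term of $u$ has no oscillating sign.
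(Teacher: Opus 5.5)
Your proposal is correct and follows essentially the same route as the paper's proof in Appendix~\ref{app:existspower}. Both arguments run through the offset-$0$ grouped expansion, sample along $k=n_0m$ to obtain the $\Q_+$-rational sequence $e_i^\top(A^{Dn_0})^me_j$, use the non-merging of nondegenerate bases together with Berstel's lemma to get a unique dominant class, use conjugation stability and Lemma~\ref{lem:torsion}(ii) to get $\mu_{C_*}>0$, and let the nonnegative subsequence force a positive leading coefficient. The only difference is cosmetic: you conclude $u_k>0$ directly from the dominant-term limit, whereas the paper verifies the offset-$0$ criterion and then invokes the sufficiency direction of Theorem~\ref{thm:criterion}.
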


\begin{proof}[Proof sketch]
$A^{Dn_0}=(A^{n_0})^D\geq0$, so every entry's subsequence along
$\{Dn_0m\}$ is $\Q_+$-rational.  For each entry that is not eventually
zero, Berstel's theorem applied to this subsequence yields the unique
dominant class with positive leading coefficient that the criterion of
Theorem~\ref{thm:criterion} requires along the progression $Dk$, and
the sufficiency direction of the criterion then gives $A^{Dk}\geq0$
for all large $k$.  The full proof is in
Appendix~\ref{app:existspower}.
\end{proof}

\begin{proposition}[Offset-$0$ test]\label{prop:offsetzero}
Whether $A^{Dk}\geq0$ for all sufficiently large $k$ is decidable in
deterministic polynomial time.
\end{proposition}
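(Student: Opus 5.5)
The plan is to rerun the entire offset-$1$ pipeline with offset $0$. Set
\[
 B^{(0)}_{C,s}=\sum_{\lambda\in C}\lambda^{-s}H_{\lambda,s},
 \qquad
 c_{C,s}^{ij}=\bigl(B^{(0)}_{C,s}\bigr)_{ij}.
\]
First, the grouped expansion. Exactly as in Lemma~\ref{lem:expansion}, for $Dk\geq d$ we get
\[
 (A^{Dk})_{ij}=\sum_C\Bigl(\sum_{s<d}\binom{Dk}{s}c_{C,s}^{ij}\Bigr)\mu_C^k .
\]
The same bases $\mu_C$ appear, so nondegeneracy carries over, and the leading coefficient is $D^rc_{C,r}^{ij}/r!$. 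The polynomials $\binom{DX}{s}$ still have distinct degrees $s$, so the cancellation argument is unchanged.

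Second, an exact offset-$0$ criterion. $A^{Dk}\geq0$ for all large $k$ if and only if, for every $(i,j)$, either all $c_{C,s}^{ij}=0$, or there is a unique dominant active class $C_*$ whose top nonzero coefficient $c_{C_*,r}^{ij}$ is positive.
\begin{itemize}
\item Sufficiency is the dominant-term argument from Theorem~\ref{thm:criterion}. The conjugation step gives $\overline{c_{C,s}^{ij}}=c_{\overline C,s}^{ij}$, so $C_*$ is conjugation-stable and $\mu_{C_*}>0$ by Lemma~\ref{lem:torsion}.
\item For necessity, suppose $A^{Dk}\geq0$ for $k\geq t$. Then $v_m=(A^{Dt(m+1)})_{ij}=e_i^\top A^{Dt}(A^{Dt})^me_j$ is $\Q_+$-rational. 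It is a subsequence of $u_k=(A^{Dk})_{ij}$ along $k=t(m+1)$, which maps the bases to $\mu_C^t$; these stay distinct and nondegenerate.
\item If $u$ is not eventually zero, then $v$ is not eventually zero either, by Lemma~\ref{lem:exppoly}: its coefficient polynomials are $Q_C(t(m+1))$, which are nonzero. Lemma~\ref{lem:berstel} then gives a unique dominant root of $v$, hence a unique dominant class for $u$. Eventual nonnegativity of $u$ forces its leading coefficient to be positive, since the leading term dominates.
\end{itemize}
This mirrors Appendix~\ref{app:criterion} with $Dt$ in place of $Dt+1$. In fact it is slightly simpler, because the realization uses a single nonnegative matrix.

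Third, the algorithmic part. Each $c_{C,s}^{ij}$ is the class sum of $T^{(0)}_{\lambda,s}=\lambda^{-s}\bigl((A-\lambda I)^sE_\lambda\bigr)_{ij}\in\Q(\lambda)$. By Lemma~\ref{lem:projector}, each term has a polynomial-size exact representation. Its height certificate comes from the same calculation as in Appendix~\ref{app:heights}, since only the power of $\lambda$ changes, from $1-s$ to $-s$. Conjugating the offset-$0$ grouped expansion of the rational sequence $(A^{Dk})_{ij}$ gives the equivariance law $\sigma(c_{C,s}^{ij})=c_{\sigma(C),s}^{ij}$, verbatim as in Lemma~\ref{lem:grouped}. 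Corollary~\ref{cor:blocktest} with singleton blocks then gives zero and sign tests in polynomial time. Comparing moduli of classes is identical to step 3 of the algorithm, so running steps 1--5 with $c$ in place of $b$ decides the property.

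I expect no deep obstacle, since all the heavy lifting is reused. The step that needs the most care is the necessity direction at offset $0$, specifically checking that the sampled subsequence is not eventually zero when $u$ is not, and that the subsequence bases $\mu_C^t$ remain distinct. Both follow from the nondegeneracy guaranteed by Lemma~\ref{lem:torsion} together with Lemma~\ref{lem:exppoly}. The remaining point is routine bookkeeping: confirming that the height bound for $\lambda^{-s}$ stays polynomial, which holds because $h(\lambda^{-1})=h(\lambda)$.
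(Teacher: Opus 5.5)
Your proposal is correct and follows the paper's proof. Both use the offset-$0$ grouped expansion with coefficients $\sum_{\lambda\in C}\lambda^{-s}(H_{\lambda,s})_{ij}$, obtain equivariance by conjugating that expansion, apply the singleton-block certified tests, and prove necessity via Berstel. The only cosmetic difference is that the paper takes necessity from the proof of Lemma~\ref{lem:onepower}, sampling $u_{n_0m}=e_i^\top(A^{Dn_0})^me_j$ from a single nonnegative power $A^{n_0}$, whereas you sample directly along $k=t(m+1)$ using $A^{Dt}\geq0$; both work.
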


\begin{proof}[Proof sketch]
The grouped expansion holds verbatim along $Dk$, with the same
nondegenerate class roots $\mu_C$ and with the offset-$0$ coefficients
$b_{C,s}^{ij}(0)=\sum_{\lambda\in C}\lambda^{-s}(H_{\lambda,s})_{ij}$,
and the criterion of Theorem~\ref{thm:criterion} transfers to offset
$0$: sufficiency is the same asymptotic argument, and necessity
follows from the proof of Lemma~\ref{lem:onepower}.  The coefficients
obey the equivariance law of Lemma~\ref{lem:grouped}---conjugate the
offset-$0$ expansion---so Lemma~\ref{lem:separation} decides their
zero and sign, and the algorithm of Section~\ref{sec:algorithm} runs
unchanged with these coefficients
(Appendix~\ref{app:existspower}).
\end{proof}

\begin{corollary}[Power nonnegativity is in P]\label{cor:existspower}
One can decide in deterministic polynomial time whether a given
rational matrix is power nonnegative, i.e., whether there exists
$n\geq1$ with $A^n\geq0$, upgrading the effective characterization
of~\cite[Remark~8]{dcosta-ow} to a polynomial-time procedure; this
is the one-matrix case of the nonnegativity problems studied there for
semigroups generated by several commuting matrices.
\end{corollary}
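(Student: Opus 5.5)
The plan is to reduce power nonnegativity to the offset-$0$ test of Proposition~\ref{prop:offsetzero}, using Lemma~\ref{lem:onepower} for one direction and a trivial observation for the other. Concretely, I claim
\[
 \exists\,n\geq1:\ A^n\geq0
 \quad\Longleftrightarrow\quad
 A^{Dk}\geq0\ \text{for all sufficiently large }k,
\]
where $D$ is the even torsion-killing modulus of Lemma~\ref{lem:torsion}. Note that $D$ depends only on the spectrum of $A$ and not on any witness exponent. The right-to-left implication is immediate: take $n=Dk$ for one large $k$, which is a positive integer since $D\geq1$. The left-to-right implication is exactly Lemma~\ref{lem:onepower}.

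The algorithm is therefore as follows. First compute the torsion classes and the binary representation of $D$ in polynomial time via Lemma~\ref{lem:torsion}. Then run the offset-$0$ variant of the algorithm of Section~\ref{sec:algorithm}: decide zero and sign of the offset-$0$ grouped coefficients $b_{C,s}^{ij}(0)=\sum_{\lambda\in C}\lambda^{-s}(H_{\lambda,s})_{ij}$ by Lemma~\ref{lem:separation}, compare class moduli as in step~3, and accept exactly when every entry satisfies case (1) or case (2) of the criterion. By Proposition~\ref{prop:offsetzero} this runs in deterministic polynomial time and decides the right-hand side, and hence, by the equivalence above, decides power nonnegativity. Since no power $A^n$ is ever formed, it does not matter that the least witness exponent $n$, or $D$ itself, may be exponentially large as a value.

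The substantive content sits entirely in Lemma~\ref{lem:onepower}, which I take as given; within the corollary it is the only nontrivial step. If I had to check it, the delicate point is that a single witness $n_0$ yields the $\Q_+$-rational subsequences along $\{Dn_0m\}$. Tracing the unique dominant class back from that subsequence to the full progression $Dk$ requires that passing to a sub-progression merges no class roots. This holds because the $\mu_C$ form a nondegenerate set, so their $n_0$th powers remain pairwise distinct.

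Finally, I would observe that the offset-$0$ coefficients inherit the equivariance and the degree and height bounds of Lemma~\ref{lem:grouped} with $\lambda^{-s}$ in place of $\lambda^{1-s}$. The height estimates change only by $h(\lambda)$, so Corollary~\ref{cor:blocktest} applies verbatim.
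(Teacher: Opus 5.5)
Your proposal is correct and matches the paper's proof: the same equivalence (the trivial direction by taking $n=Dk$, the nontrivial one by Lemma~\ref{lem:onepower}), followed by running the offset-$0$ test of Proposition~\ref{prop:offsetzero}. Your side remarks are also consistent with the paper's appendix, namely that passing to the subsequence merges no class roots and that the heights change only by $h(\lambda)$.
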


\begin{proof}
If $A^{Dk}\geq0$ for some $k$ with $Dk\geq1$, then some power is
nonnegative; conversely, Lemma~\ref{lem:onepower} converts any
nonnegative power into eventual nonnegativity along $Dk$.  So the
answer is yes exactly when the test of
Proposition~\ref{prop:offsetzero} accepts.
\end{proof}

\section{Conclusion}\label{sec:conclusion}

Eventual nonnegativity of a rational matrix is decidable in polynomial
time: a single coprime-offset progression determines the structure of the power sequence,
and the spectral criterion along it is checked by exact computation in
individual root fields, not in the splitting field of the characteristic polynomial.  We note three
directions for further work.

First, \emph{commuting families}.  For semigroups generated by several
commuting matrices, the existence of a nonnegative matrix in the semigroup is decidable
subject to Schanuel's conjecture~\cite{dcosta-ow}. Can one decide
whether, for commuting generators $A_1,\ldots,A_m$, there exists $N$
such that
$A_1^{n_1}\cdots A_m^{n_m}\geq0$ for all $n_1,\ldots,n_m\geq N$?

Second, the \emph{entry dichotomy}.  Eventual nonnegativity of all
$d^2$ entries is in P by Theorem~\ref{thm:main}; eventual
nonnegativity of one prescribed entry is Ultimate Positivity, whose
decidability is open.  Nothing is known between these extremes, for
example for a prescribed subset of entries.

Third, the \emph{least threshold}.  Our algorithm certifies \emph{a}
threshold with polynomially many bits (Remark~\ref{rem:threshold}), but
the complexity of computing the least eventuality threshold---or of
deciding ``$A^n\geq0$ for all $n\geq N$'' with $N$ given in binary---is
unresolved.

\appendix

\section{Deferred proofs}\label{app:proofs}

\subsection{Lemma~\ref{lem:torsion} (Torsion classes and modulus)}\label{app:torsion}

\begin{proof}
Canonical representations of the distinct roots of the characteristic
polynomial can be computed in polynomial time: factor $\chi_A$ over
$\Q$~\cite{lll} and isolate the roots of each irreducible factor by a
polynomial-time root-isolation procedure~\cite{yap}.  For each
pair $\lambda,\gamma$, the quotient $\eta=\lambda/\gamma$ has degree at
most $d^2$.  If it is a root of unity of order $r$, then
$\varphi(r)\leq d^2$ and $\varphi(r)\geq\sqrt{r/2}$, hence
$r\leq2d^4$.  (The totient inequality follows from
$r/\varphi(r)^2=\prod_{p^a\parallel r}p^{2-a}/(p-1)^2\leq2$: the factor at
$p=2$ is at most $2$, and every odd-prime factor is less than $1$.)  The
order can therefore be found by testing $\lambda^r-\gamma^r=0$ for
$1\leq r\leq2d^4$.  (One may first test whether
$|\lambda|=|\gamma|$---a single certified sign test on
$\lambda\overline\lambda-\gamma\overline\gamma$, as in Step~3 of the
algorithm---and run the loop only on pairs of equal modulus, since a
root-of-unity ratio has modulus $1$.)  The difference lies in $\Q(\lambda,\gamma)$, hence has
degree at most $d^2$, and
\[
 h(\lambda^r-\gamma^r)
 \leq r\bigl(h(\lambda)+h(\gamma)\bigr)+\log2,
\]
which is polynomially bounded in the input size, since $r\leq2d^4$ and the
eigenvalues have polynomial height as roots of $\chi_A$.
Lemma~\ref{lem:certified} therefore decides each equality in polynomial
time.  The least successful $r$ is the exact order, and the pairwise
decisions determine the classes.  Only pairwise calculations are
involved: we never build one field containing \emph{all} eigenvalues,
whose degree could be exponential.

Let $L$ be the least common multiple of the orders found, with
$\operatorname{lcm}(\varnothing)=1$, and put $D=2L$.  There are at most
$d^2$ recorded orders, each at most $2d^4$, so $\log D=O(d^2\log d)$.  If
$\lambda\sim\gamma$, then the order of $\lambda/\gamma$ divides $L$, hence
divides $D$, and $\lambda^D=\gamma^D$.  Conversely, $\lambda^D=\gamma^D$
gives $(\lambda/\gamma)^D=1$, so $\lambda/\gamma$ is torsion.  This proves
the first claim.

For the last claim, suppose $\mu=\lambda^D$ is real.  If $\lambda$ is
real, then $\mu>0$ because $D$ is even.  Otherwise
$\zeta=\lambda/\overline\lambda$ satisfies $\zeta^D=1$.  Since $\chi_A$
has real coefficients, $\overline\lambda$ is also an eigenvalue, so the
order $m$ of $\zeta$ is among the orders recorded in $L$; in particular
$m\mid L$.  Then $\lambda^m=\overline{\lambda^m}$ is real, and
$D/m=2L/m$ is even.  Hence $\lambda^D=(\lambda^m)^{D/m}>0$.  The
underlying canonical-arithmetic and root-of-unity procedures are stated
explicitly in~\cite{ospw,ow-ultimate}.
\end{proof}

\subsection{Lemma~\ref{lem:exppoly} (Uniqueness of exponential--polynomial forms)}\label{app:exppoly}

\begin{proof}
Repeatedly apply the difference operator
$w_k\mapsto w_{k+1}-\nu_m w_k$: each application lowers $\deg P_m$ by one
and, for $\ell<m$, replaces $P_\ell$ by
$\nu_\ell P_\ell(k+1)-\nu_mP_\ell(k)$, which has the \emph{same} degree as
$P_\ell$ because $\nu_\ell\neq\nu_m$.  After $\deg P_m+1$ applications
the $\nu_m$-term is gone and no other degree has changed, so induction on
$m$ finishes.  Equivalently, the relevant confluent Vandermonde
determinant is nonzero.
\end{proof}

\subsection{Theorem~\ref{thm:criterion} (Exact criterion)}\label{app:criterion}

\begin{proof}
We first justify the automatic stability claimed in case (2), using only
that $u_k=(A^{Dk+1})_{ij}$ is a real sequence.
Conjugating~\eqref{eq:grouped-expansion} gives
\[
 u_k=\overline{u_k}
 =\sum_C\overline{Q_C^{ij}}(k)\,\overline{\mu_C}^{\,k}
 =\sum_C\overline{Q_C^{ij}}(k)\,\mu_{\overline C}^{\,k},
\]
where $\overline{\mu_C}=\overline{\lambda^D}=\overline\lambda^{\,D}
=\mu_{\overline C}$.  Lemma~\ref{lem:exppoly} makes this the same
expansion termwise: conjugation permutes the active classes, preserves
their moduli, and sends $Q_C^{ij}$ to
$Q_{\overline C}^{ij}=\overline{Q_C^{ij}}$.  Since the polynomials
$\binom{DX+1}{s}$ have rational coefficients and distinct degrees, this
says $\overline{b_{C,s}^{ij}}=b_{\overline C,s}^{ij}$ for every $C,s,i,j$.
A unique active class $C_*$ of maximum modulus is therefore fixed by
conjugation; then $\mu_{C_*}=\overline{\mu_{C_*}}$ is real, hence $>0$ by
Lemma~\ref{lem:torsion}, and every $b_{C_*,s}^{ij}$ is real.

Suppose first that $A$ is eventually nonnegative.  Choose $t$ sufficiently
large that $Dt\geq d$ and $M_0=A^{Dt}\geq0$, $M_1=A^{Dt+1}\geq0$.  Fix
$(i,j)$ and put $u_k=(A^{Dk+1})_{ij}$ and $v_m=u_{t(m+1)}$.  Then
\[
 v_m=u_{t(m+1)}=\bigl(A^{Dt(m+1)+1}\bigr)_{ij}
 =e_i^\top A^{Dt+1}(A^{Dt})^me_j=e_i^\top M_1M_0^me_j,
\]
so $v$ has a nonnegative rational linear representation.  The nonzero
characteristic roots of $v$ are $\mu_C^t$ for the active classes $C$.
They remain nondegenerate---if $(\mu_C/\mu_{C'})^t$ were a root of unity,
then $\mu_C/\mu_{C'}$ would be one---and a nonzero polynomial $Q_C^{ij}$
remains nonzero after the affine substitution $k=t(m+1)$.  If no class is
active,
$u$ is eventually zero and case (1) holds.  Otherwise
Lemma~\ref{lem:exppoly} shows that $v$ is not eventually zero.  It is a
nondegenerate $\Q_+$-rational sequence, so Lemma~\ref{lem:berstel} gives
it a unique dominant root.  Therefore $u$ has a unique active class $C_*$
of maximum modulus.  By the opening paragraph, $\mu_{C_*}>0$ and
$b_{C_*,r}^{ij}$ is real, and by Lemma~\ref{lem:expansion},
$u_k/(k^r\mu_{C_*}^k)\to D^rb_{C_*,r}^{ij}/r!$.  The left-hand side is
eventually nonnegative and the limit is nonzero, so $b_{C_*,r}^{ij}>0$.

Conversely, suppose the stated condition holds.  In the first case $u_k$
is eventually zero.  In the second case, let
$\rho=\max\{|\mu_C|:C\neq C_*\text{ is active}\}$, with $\rho=0$ if there
is no other active class.  Then
\[
 u_k=\mu_{C_*}^k
 \left(\frac{D^r}{r!}b_{C_*,r}^{ij}k^r+O(k^{r-1})\right)
 +O(k^{d-1}\rho^k).
\]
Since $\rho<\mu_{C_*}$ and $b_{C_*,r}^{ij}>0$, we have $u_k>0$ for all
sufficiently large $k$.  Taking the maximum of the finitely many entrywise
thresholds gives $A^{Dk+1}\geq0$ eventually, and
Lemma~\ref{lem:coprime-residue} gives eventual nonnegativity of $A$.
\end{proof}

\subsection{Lemma~\ref{lem:projector} (Explicit projector)}\label{app:projector}

\begin{proof}
Expand $q_\lambda(\lambda+T)=\sum_{r\geq0}a_rT^r$ with
$a_0=q_\lambda(\lambda)\neq0$, and define recursively $c_0=a_0^{-1}$ and
$c_r=-a_0^{-1}\sum_{t=1}^r a_tc_{r-t}$ for $1\leq r<m_\lambda$; then
$h_\lambda(X)=\sum_{r<m_\lambda}c_r(X-\lambda)^r$ is the inverse Taylor
series of $q_\lambda$ modulo $(X-\lambda)^{m_\lambda}$, so
$p_\lambda=q_\lambda h_\lambda$ satisfies
$p_\lambda\equiv1\pmod{(X-\lambda)^{m_\lambda}}$.  For every other
eigenvalue $\gamma$, the polynomial $q_\lambda$ is divisible by
$(X-\gamma)^{m_\gamma}$, so
$p_\lambda\equiv0\pmod{(X-\gamma)^{m_\gamma}}$.  Evaluation at the Jordan
form therefore shows that $p_\lambda(A)$ is the identity on every
$\lambda$-Jordan block and zero on every other, i.e., $E_\lambda$.

For complexity, arithmetic takes place only in the degree-at-most-$d$
field $\Q(\lambda)$.  To avoid concealing coefficient growth in the
recursion for the $c_r$, expand all equations
$\sum_{t=0}^r a_tc_{r-t}=\delta_{r0}$ ($0\leq r<m_\lambda$)
simultaneously in a power basis for $\Q(\lambda)$.  This is a rational
linear system of dimension at most $d^2$, with polynomial-bit
coefficients.  Gaussian elimination and determinant bounds give
polynomial-bit coordinates for every $c_r$.  Subsequent polynomial and
matrix evaluation preserves a polynomial bit bound.  Equivalently,
standard height inequalities applied to the displayed formulas give
polynomial logarithmic height.  No splitting field is constructed.
\end{proof}

\subsection{Explicit height bounds}\label{app:heights}
Clear denominators in the power-basis output and write
$T_{\lambda,s}^{ij}=g(\lambda)/q$ with $g\in\Z[X]$,
$\deg g<[\Q(\lambda):\Q]$, and $q\in\Z_{>0}$.  If $\widehat h_\lambda$ is
any explicitly computed upper bound for $h(\lambda)$, then
\[
 h(T_{\lambda,s}^{ij})
 \leq\log\max(1,\lVert g\rVert_1)+(\deg g)\,\widehat h_\lambda+\log q.
\]
The irreducible factors of $\chi_A$ have polynomial bit length by
Mignotte's factor bound~\cite{yap}, and if a primitive integral minimal
polynomial has degree $e$ and coefficient height $H$, the standard
Mahler-measure estimate gives the safe bound
$\widehat h_\lambda=\log H+\log(e+1)$.  Every quantity on the right-hand
side has polynomial bit length and polynomial magnitude, so the algorithm
can compute an integer $\widehat H_{\lambda,s}^{ij}$ with
$h(T_{\lambda,s}^{ij})\leq\widehat H_{\lambda,s}^{ij}\log2$; then
\[
 \widehat H_{C,s}^{ij}
 =\sum_{\lambda\in C}\widehat H_{\lambda,s}^{ij}+|C|-1
\]
satisfies $h(b_{C,s}^{ij})\leq\widehat H_{C,s}^{ij}\log2$ and is
polynomially bounded.  The height inequalities used here, and the relation
between a power-basis representation and absolute height, are standard;
see~\cite[Chapter~1]{bombieri-gubler}.

\subsection{Remark~\ref{rem:threshold} (A valid threshold has polynomially many bits)}\label{app:threshold}

\begin{proof}
Fix an entry $(i,j)$ in case (2) of Theorem~\ref{thm:criterion}
and write $Q_C$ for $Q_C^{ij}$, expanded in powers of $X$ rather than
in the binomial basis $\binom{DX+1}{s}$; the conversion factors are
bounded by $(2D)^dd!$ and have polynomial logarithmic size.  The
explicit bounds of Section~\ref{sec:grouped}, together with
Lemma~\ref{lem:certified} applied to the nonzero quantities involved,
therefore give a polynomial $p$ in the input size such that, after
enlarging $p$ if necessary so that $2^p\geq2p+d+4$:
\begin{enumerate}
\item the leading (degree-$r$) coefficient of $Q_{C_*}$ is positive
      and at least $2^{-p}$;
\item the absolute values of the monomial coefficients of all the
      $Q_C$, summed over all classes and degrees, are at most $2^{p}$;
\item $\log_2(\mu_{C_*}/\rho)\geq2^{-p}$ whenever a subdominant active
      class exists ($\rho>0$).
\end{enumerate}
For (3), the separation is inherited from class representatives:
$\log_2(\mu_{C_*}/\rho)=D\log_2(|\lambda_*|/|\gamma|)$ with
$|\lambda_*|>|\gamma|$, the difference $|\lambda_*|^2-|\gamma|^2$ is a
nonzero real algebraic number of polynomial degree and height, hence at
least $2^{-\poly}$, and the exponentially large factor $D\geq1$ only
\emph{widens} the gap.
Write $u_k=(A^{Dk+1})_{ij}=Q_{C_*}(k)\,\mu_{C_*}^k+E_k$ with
$E_k=\sum_{C\neq C_*}Q_C(k)\,\mu_C^k$.

\emph{Main term.}  By (1) and (2), for $k\geq1$,
\[
 Q_{C_*}(k)\;\geq\;2^{-p}k^r-2^{p}k^{r-1}
 \;=\;2^{-p}k^r\Bigl(1-\frac{2^{2p}}{k}\Bigr),
\]
so $k>2^{2p+2}$ gives $Q_{C_*}(k)\geq\tfrac12\,2^{-p}k^r>0$.

\emph{Error term.}  By (2), $|E_k|\leq2^{p}k^{d-1}\rho^k$, and by
(3), $(\rho/\mu_{C_*})^k\leq2^{-k2^{-p}}$.  Hence
$|E_k|<\tfrac12\,2^{-p}k^r\mu_{C_*}^k$, and thus $u_k>0$, as soon as in
addition
\[
 k\,2^{-p}\;>\;2p+2+d\log_2k .
\]

\emph{Crossover.}  Both conditions hold at $k_0=(d+1)2^{2p+2}$: the
first trivially, and the second because
$k_0\,2^{-p}=(d+1)2^{p+2}$, while
\[
 2p+2+d\log_2k_0
 \;\leq\;(d+1)\bigl(2p+2+\log_2(d+1)\bigr)
 \;<\;(d+1)2^{p+2}
\]
by $2^p\geq2p+d+4$ and $\log_2(d+1)\leq d$.  They persist for every
$k\geq k_0$, since the left side of the second condition grows linearly
in $k$ and the right side only logarithmically.  Hence $u_k>0$ for all
$k\geq(d+1)2^{2p+2}$.  When $C_*$ is the only active class, $\rho=0$,
the error term is absent and the main-term condition alone suffices.
Taking the maximum over entries gives $K$, with
polynomially many bits, such that $A^{Dk+1}\geq0$ for all $k\geq K$; the
Frobenius bound applied to $a=DK+1$ and $b=D(K+1)+1$ then yields a valid
threshold $N\leq ab-a-b+1$ with polynomially many bits.
\end{proof}

\subsection{Lemma~\ref{lem:onepower}, Proposition~\ref{prop:offsetzero}, and Corollary~\ref{cor:existspower}}\label{app:existspower}

\begin{proof}
We first set up the expansion at offset $0$.  On the generalized
$\lambda$-eigenspace the binomial theorem gives
$A^nE_\lambda=\sum_s\binom ns\lambda^{n-s}H_{\lambda,s}$; substitute
$n=Dk$ and group those $\lambda$ with equal $D$th powers to obtain
\begin{equation}\label{eq:grouped-zero}
 (A^{Dk})_{ij}=\sum_C Q_C^{ij,(0)}(k)\mu_C^k,
 \qquad
 Q_C^{ij,(0)}(X)=\sum_{s=0}^{d-1}\binom{DX}{s}b_{C,s}^{ij}(0),
\end{equation}
with
\[
 b_{C,s}^{ij}(0)=\sum_{\lambda\in C}\lambda^{-s}(H_{\lambda,s})_{ij},
\]
valid for every $k$ with $Dk\geq d$: the analogue at offset $0$
of~\eqref{eq:grouped-expansion} and~\eqref{eq:B-definition}.  The bases
$\mu_C$ are unchanged, so modulus comparisons between classes are those
of Step~3 of the algorithm.  The coefficients are as computable as
before: conjugating~\eqref{eq:grouped-zero} and invoking
Lemma~\ref{lem:exppoly}, exactly as in the proof of
Lemma~\ref{lem:grouped}, gives
$\sigma(b_{C,s}^{ij}(0))=b_{\sigma(C),s}^{ij}(0)$ for every
automorphism $\sigma$, so the degree bound, the height bookkeeping,
and the zero and sign tests of Lemma~\ref{lem:separation} apply
verbatim to the $b_{C,s}^{ij}(0)$.  The algorithm
therefore runs Steps~2--4 of Section~\ref{sec:algorithm} with
$b_{C,s}^{ij}(0)$ in place of $b_{C,s}^{ij}$, accepting exactly when
every entry satisfies the criterion of Theorem~\ref{thm:criterion} read
at offset $0$.  It remains to prove that some power of $A$ is
nonnegative if and only if the offset-$0$ criterion holds.

If the criterion holds, the sufficiency direction of
Theorem~\ref{thm:criterion}---a pure asymptotic argument on the
expansion~\eqref{eq:grouped-zero}, unchanged at offset $0$---gives
$A^{Dk}\geq0$ for all large $k$; pick such
a $k$ with $Dk\geq1$; then $n=Dk$ is a nonnegative power.

Conversely, suppose $A^{n_0}\geq0$ for some $n_0\geq1$.  We verify the
offset-$0$ criterion.  Fix
an entry $(i,j)$ and write $u_k=(A^{Dk})_{ij}$.  Only the dominance
condition of the criterion needs checking.  If no class is active in
$u$ then $u$ is eventually zero and there is nothing to check.
Otherwise, pass to the subsequence
\[
 v_m=u_{n_0m}=(A^{Dn_0m})_{ij}=e_i^\top(A^{Dn_0})^me_j.
\]
Since $A^{Dn_0}=(A^{n_0})^D\geq0$, the sequence $v$ is $\Q_+$-rational.
Substituting $k=n_0m$ into the offset-$0$
expansion~\eqref{eq:grouped-zero} gives
$v_m=\sum_C Q_C^{ij,(0)}(n_0m)\,\mu_C^{n_0m}$, so the characteristic
roots of $v$ are the numbers $\mu_C^{n_0}$ for the classes active in
$u$.  The coefficient polynomials remain nonzero after this affine
substitution, so $v$ is not eventually zero by
Lemma~\ref{lem:exppoly}, and the roots remain nondegenerate, since a
root-of-unity ratio $(\mu_C/\mu_{C'})^{n_0}$ would make
$\mu_C/\mu_{C'}$ a root of unity.  Berstel's theorem therefore gives
$v$ a unique dominant root, so $u$ has a unique dominant class $C_*$.
Its leading coefficient is positive: the offset-$0$ expansion gives
$u_k/(k^r\mu_{C_*}^k)\to D^rb_{C_*,r}^{ij}(0)/r!$, and along the
subsequence $k=n_0m$ the left-hand side is nonnegative for every
$m\geq1$ while the limit is nonzero.  Hence the offset-$0$ criterion
holds.
\end{proof}

\section{Two worked examples}\label{app:example}

The exact criterion of Theorem~\ref{thm:criterion} has two branches,
and the algorithm correspondingly rests on two primitives: comparing the
moduli of active classes, and certifying that a grouped coefficient
vanishes or has a definite sign.  The two examples below isolate one
branch each.  The first is an end-to-end yes-instance decided entirely by
dominance---case~(2)---with the one-progression reduction visible; its
eigenvalues are rational, so exact cancellation cannot occur and the
certified tests are never stressed.  The second matrix is nonnegative
outright, so its acceptance is no surprise; its value is that one entry
falls into case~(1) through an exact cancellation of irrational
contributions, the situation that forces the zero test of
Lemma~\ref{lem:separation} to be exact rather than numerical.

\subsection{Case (2): dominance, end to end}

Let
\[
 A_\star=\frac1{30}
 \begin{pmatrix}
 21&-9&21\\
 -9&21&21\\
 21&21&-9
 \end{pmatrix}.
\]
It has negative entries, yet its powers eventually become strictly
positive: with the orthogonal projectors $P_\rho,P_+,P_-$ onto the lines
spanned by $(1,1,1),(1,-1,0),(1,1,-2)$, one has
$A_\star=\frac{11}{10}P_\rho+P_+-P_-$, hence
$A_\star^n=(11/10)^nP_\rho+P_++(-1)^nP_-$.  The smallest entries are
$((11/10)^n-1)/3$ for even $n$ and $((11/10)^n-2)/3$ for odd $n$, so
$A_\star^n>0$ for every $n\geq8$.  The matrix thus exhibits the main
phenomenon: negative entries and a degenerate eigenvalue pair $1,-1$, but
eventually strictly positive powers.

Trace the algorithm.  The eigenvalues are $11/10,1,-1$.  The quotient
$1/(-1)=-1$ has order $2$, while neither quotient involving $11/10$ is a
root of unity (its absolute value is not $1$).  Hence the torsion classes
are $C_\rho=\{11/10\}$ and $C_0=\{1,-1\}$, with $L=2$ and $D=4$; indeed
$(11/10)^4=14641/10000$ and $1^4=(-1)^4=1$, so the degenerate pair
collapses to one positive base.  Since $A_\star$ is diagonalizable, only
$s=0$ occurs, and the grouped expansion is
\[
 A_\star^{4k+1}
 =\frac{11}{10}\left(\frac{14641}{10000}\right)^kP_\rho+(P_+-P_-),
\]
so $B_{C_\rho,0}=(11/10)P_\rho$, with every entry $11/30>0$, and
$B_{C_0,0}=P_+-P_-$.  Comparing the squared moduli $121/100>1$ selects
$C_\rho$ as the unique dominant class for every entry, every leading
coefficient equals $11/30>0$, and the algorithm accepts, in agreement with
the direct formula.  The one-progression reduction is also visible:
$A_\star^{4k+1}\geq0$ for $k\geq2$, the exponents $9$ and $13$ are
coprime, and every $n>9\cdot13-9-13=95$ is a nonnegative combination of
them.  (The direct threshold is $8$; the Frobenius bound is deliberately
coarse.)

\subsection{Case (1): exact cancellation across root fields}

Let
\[
 R=\begin{pmatrix}0&0&2\\1&0&0\\0&1&0\end{pmatrix},
\]
the companion matrix of $X^3-2$, so that $R^3=2I$.  Its eigenvalues
$\sqrt[3]2$, $\omega\sqrt[3]2$, $\overline\omega\sqrt[3]2$, for a
primitive cube root of unity $\omega$, have pairwise ratios of order
$3$, so they form a single torsion class $C$, with $L=3$, $D=6$, and
class root $\mu_C=(\sqrt[3]2)^6=4$.  The three root fields
$\Q(\lambda)$ are distinct cubic fields---one real, two complex
conjugates---inside a splitting field of degree $6$, which the
algorithm never constructs.

Since $R$ is diagonalizable, only $s=0$ occurs, and the projectors are
$E_\lambda=(R^2+\lambda R+\lambda^2I)/(3\lambda^2)$.  The local
contribution of $\lambda$ to an entry is
\[
 T_{\lambda,0}^{ij}
 =\lambda(E_\lambda)_{ij}
 =\tfrac13\bigl((R^2)_{ij}\lambda^{-1}+R_{ij}+\lambda\delta_{ij}\bigr)
 \in\Q(\lambda),
\]
computed entirely within its own field.  In the $(1,1)$ entry the three
contributions are $\lambda/3$ for the three roots: three irrational
numbers in three different fields, whose sum
$(1+\omega+\overline\omega)\sqrt[3]2/3=0$ cancels exactly.  In the
$(1,3)$ entry each contribution is $2/3$ and the sum is $2$.  Indeed
$B_{C,0}=\sum_\lambda\lambda E_\lambda=R$, in agreement with
$R^{6k+1}=(R^3)^{2k}R=4^kR$.

The algorithmic content is case~(1) of Theorem~\ref{thm:criterion}:
since $b_{C,0}^{11}=0$ and $C$ is the only class, no class is
$(1,1)$-active, and the algorithm classifies that entry as eventually
zero along the progression---indeed $(R^{6k+1})_{11}=0$ for every $k$.
The certified test reaches this verdict by summing balls computed
independently in the three embeddings; a numerical test could shrink
the ball around $0$ forever without certifying exactness, and it is the
degree and height bounds of Lemma~\ref{lem:grouped} that convert
``smaller than $2^{-\poly}$'' into ``exactly zero''.  The orbit bound
is sharp here: with a single torsion class the $G$-invariant partition
has one block, so Lemma~\ref{lem:orbit} certifies \emph{in advance}
that every grouped coefficient of $R$ is rational---low degree emerging
from symmetry alone, while every individual summand is irrational.

\end{document}